\colorlet{shadecolor}{gray!25}
\newcommand{\newoperator}[3]{\newcommand*{#1}{\mathop{#2}#3}}
\newcommand{\renewoperator}[3]{\renewcommand*{#1}{\mathop{#2}#3}}
\newcommand{\mA}{A}
\newcommand{\mD}{D}
\newcommand{\mH}{H}
\newcommand{\vv}{v}
\newcommand{\vtheta}{\theta}
\newcommand{\mTheta}{\varTheta}
\newcommand{\mSigma}{\varSigma}
\newcommand{\mOmega}{\varOmega}
\renewoperator{\Re}{\mathrm{Re}}{\nolimits}
\renewoperator{\Im}{\mathrm{Im}}{\nolimits}
\newcommand{\rd}{\@ifnextchar^{\DIfF}{\DIfF^{}}}
\def\DIfF^#1{%
   \mathop{\mathrm{\mathstrut d}}%
   \nolimits^{#1}\gobblespace}
\def\gobblespace{\futurelet\diffarg\opspace}
\def\opspace{%
   \let\DiffSpace\!%
   \ifx\diffarg(%
   \let\DiffSpace\relax
   \else
   \ifx\diffarg[%
   \let\DiffSpace\relax
   \else
   \ifx\diffarg\{%
   \let\DiffSpace\relax
   \fi\fi\fi\DiffSpace}
\newcommand{\E}{\operatorname{E}}
\newcommand{\Var}{\operatorname{Var}}
\newcommand{\Corr}{\operatorname{Corr}}
\newcommand{\Cov}{\operatorname{Cov}}
\newoperator{\ip}{\mathrm{int}}{\nolimits}
\newcommand{\plim}{\operatorname{plim}}
\newcommand{\pto}{\stackrel{p}{\longrightarrow}}
\newcommand{\dto}{\stackrel{d}{\longrightarrow}}
\newtheorem{theorem}{Theorem}[section]
\newtheorem{assumption}{Assumption}
\newtheorem{lemma}[theorem]{Lemma}
\newcommand{\beq}{\begin{equation}}
\newcommand{\eeq}{\end{equation}}
\newcommand{\bal}{\begin{align*}}
\newcommand{\eal}{\end{align*}}
\newcommand{\bvec}{\begin{pmatrix}}
\newcommand{\evec}{\end{pmatrix}}
\newcommand{\bmat}{\begin{bmatrix}}
\newcommand{\emat}{\end{bmatrix}}
\newcommand{\bsmat}{\begin{smallmatrix}}
\newcommand{\esmat}{\end{smallmatrix}}
\title{Monitoring the pandemic: A fractional filter for the COVID-19 contact rate}
\author[a,b]{Tobias Hartl\footnote{Corresponding author: Department of Economics and Econometrics, University of Regensburg, Universit\"atsstr. 31, 93053 Regensburg, Germany, 
		email: tobias1.hartl@ur.de}}
\affil[a]{University of Regensburg, 93053 Regensburg, Germany}
\affil[b]{Institute for Employment Research, 90478 Nuremberg, Germany}
\date{February 2021}
\begin{document}
% !TeX encoding = UTF-8
\maketitle

\thispagestyle{empty}
\setcounter{page}{0}
\vspace{-1cm}
\paragraph{\bf Abstract.}
\begin{spacing}{1.15}
	This paper aims to provide reliable estimates for the COVID-19 contact rate of a Susceptible-Infected-Recovered (SIR) model. From observable data on confirmed, recovered, and deceased cases, a noisy measurement for the contact rate can be constructed.
	To filter out measurement errors and seasonality, a novel unobserved components (UC) model is set up. It specifies the log contact rate as a latent, fractionally integrated process of unknown integration order. The fractional specification reflects key characteristics of aggregate social behavior such as strong persistence and gradual adjustments to new information. A computationally simple modification of the Kalman filter is introduced and is termed the fractional filter. It allows to estimate UC models with richer long-run dynamics, and provides a closed-form expression for the prediction error of UC models. Based on the latter, a conditional-sum-of-squares (CSS) estimator for the model parameters is set up that is shown to be consistent and asymptotically normally distributed. The resulting contact rate estimates for several countries are well in line with the chronology of the pandemic, and allow to identify different contact regimes generated by policy interventions.
	As the fractional filter is shown to provide precise contact rate estimates at the end of the sample, it bears great potential for monitoring the pandemic in real time. 
\end{spacing}

\vspace{-0.1cm}
\paragraph{\bf Keywords.}
COVID-19, filtering, long memory, SIR model, unobserved components.

\vspace{-0.1cm}
\paragraph{\bf JEL-Classification.}
C22, C51, C52.

\newpage

\section{Introduction}
Since the outbreak of COVID-19 reducing social contacts is widely viewed as the key way to contain the spread of the virus.
In terms of the Susceptible-Infected-Recovered (SIR) model\footnote{The SIR model -- in its various variants -- has recently become a popular tool to study the economic impact of the pandemic and for policy simulations, see \citet{AceCheWe2020, AveBosCl2020, Kor2021, LiuMooSc2021} among others.}, this relates to the contact rate, defined as the average number of contacts per person per time unit multiplied by the probability of disease transmission between a susceptible and an infected individual \citep{Het2000}. The probability of disease transmission should only depend on characteristics that are specific to the virus. Therefore, the contact rate can be interpreted as a proxy for aggregate social behavior and is the key variable addressed by social distancing measures. Knowing the trajectory of the contact rate would allow to draw inference on the impact of policy measures on contact reduction, to real-time monitor the dynamics of virus dispersion, and to design policy rules based on the current pandemic situation. Since the contact rate itself is unobservable, appropriate methods to estimate the contact rate are required, and will be considered in this paper.

At the early stage of the pandemic, first estimates for the natural logarithm of the contact rate were obtained by fitting a deterministic, linear time trend with structural breaks to transformations of data on confirmed, recovered, and deceased cases \citep{HarWaeWe2020, LeeLiaSe2021, LiuMooSc2021}. Modeling the log contact rate by a piece-wise linear time trend was a reasonable and pragmatic approximation given the short time series on case numbers available at that time. However, it implies that contact rate growth evolves deterministically as a straight line with jumps at the break dates. This assumption is likely to be violated by the behavior of individuals. While structural breaks may be suitable to identify turning points of the contact rate, they are inappropriate for monitoring the current pandemic situation, as breaks require at least some post-break observations to be well identified.

This paper aims to improve estimates for the contact rate of COVID-19 by taking into account key features of aggregate social behavior. In detail, the log contact rate, as denoted by $\log \beta_t$, is modeled as an unobserved, fractionally integrated process of (unknown) order $d \in \mathbb{R}_+$, generated by stochastic shocks $\{\eta_i\}_{i=1}^t$.\footnote{Fractional integration techniques have been found useful for describing the aggregate behavior of individuals in a variety of applications, e.g.\ for explaining the Deaton paradox \citep{DieRud1991} and for the estimation of the business cycle \citep{HarTscWe2020}.} 
The stochastic specification of the contact rate is motivated by the consideration that social decisions, e.g.\ on whether to meet, are made conditional on the information available at that time, e.g.\ on current social distancing measures or the state of the pandemic. As information does not evolve deterministically but appears as stochastic shocks, this suggests to treat $\log \beta_t$ as a stochastic process generated by the information shocks $\{\eta_i\}_{i=1}^t$. 
Specifying $\log \beta_t$ as a fractionally integrated process accounts for strong persistence and nonstationarity (in short:\ long memory) of social behavior. In contrast to structural breaks but also to random walks, the fractional specification allows social behavior to gradually adjust to new information both at the individual and at the aggregate level. Individually, this reflects a gradual reduction or increase of contacts as new information becomes available (e.g.\ as new contact restrictions are imposed), while on aggregate it allows individuals to react heterogeneously both in terms of speed and intensity to novel information. As the persistence of the log contact rate is unknown, the integration order $d$ is treated as an unknown parameter to be estimated. 

Methodologically, this paper contributes to the literature on time series filtering by setting up a novel unobserved components (UC) model that does not require prior knowledge about the integration order of the variable under study. Current UC models and related filtering techniques rely heavily on prior assumptions about the integration order $d$ and typically assume $d=1$ \citep[e.g.][]{Har1985, MorNelZi2003, ChaMilPa2009} or $d=2$ \citep[e.g.][]{Cla1987, HodPre1997, OhZivCr2008} to be known. In contrast, the novel UC model reflects that the degree of persistence of the log contact rate is unknown. It allows to decompose a noisy measurement for the log contact rate that is based on a transformation of data on confirmed, recovered, and deceased cases, into measurement errors, seasonal components, and the unobserved log contact rate itself. As the latter is modeled by a fractionally integrated process, the model is called the fractional UC model. 

The second methodological contribution of this paper is to derive a computationally much simpler estimator for the model parameters and the unobserved components compared to current state space methods. Current methods typically rely on the Kalman filter to set up a conditional \mbox{(quasi-)likelihood} function for the estimation of the model parameters. Given the parameter estimates, a time-varying signal for the unobserved components is then obtained from the Kalman smoother. Both the Kalman filter and smoother become computationally infeasible when the dimension of the state vector of UC models is high, as for fractionally integrated processes. To address this problem, this paper proposes a computationally simple modification of the Kalman filter and smoother that is termed the fractional filter. While filtered and smoothed estimates from the fractional filter are identical to the Kalman filter and smoother, the fractional filter avoids the computationally intensive recursions for the conditional variance. The fractional filter provides a closed-form expression for the prediction error of UC models, based on which a 
conditional-sum-of-squares (CSS) estimator for the fractional integration order and other model parameters is set up. While the CSS estimator has been found useful for the estimation of ARFIMA models, see \cite{HuaRob2011} and \cite{Nie2015}, it has not been considered in the UC literature so far. The CSS estimator minimizes the sum of squared prediction errors that is proportional to the exponent in the conditional \mbox{(quasi-)likelihood} function based on the Kalman filter. Due to the computational gains from the fractional filter, the CSS estimator allows to estimate UC models with richer long-run dynamics. 
The paper provides the asymptotic theory for the CSS estimator, showing it to be consistent and asymptotically normally distributed, while the finite sample properties are assessed by a Monte Carlo study.

Using data from the Johns Hopkins University Center for Systems Science and Engineering \citep[JHU CSSE]{DonDuGa2020}, estimates for contact and reproduction rate are presented for Canada, Germany, Italy, and the United States, where benefits from the new methods directly become apparent: 
First, estimation results are not only well in line with the chronology of the pandemic, but also allow to identify different contact regimes generated by the strengthening and easing of contact restrictions. Second, a recursive window evaluation shows contact rate estimates at the end of a truncated sample to largely overlap with those based on the full sample information. This makes the fractional filter a suitable candidate for monitoring outbreaks at the current frontier of the data. And third, the proposed estimation and filtering techniques are shown to be fairly robust to under-reporting of recovered cases, which is of particular importance for the US, as several states do not report data on recovered individuals. While under-reporting heavily downward-biases contact and reproduction rate estimates in \cite{LeeLiaSe2021}, this is shown not to be the case for the fractional filter.

The remaining paper is organized as follows: Section \ref{model} motivates the specification of the contact rate and sets up the fractional UC model. Section \ref{Ch:est} introduces the fractional filter for $\log \beta_t$, covers parameter estimation via the CSS estimator and presents the asymptotic theory. Section \ref{application} contains empirical results for Canada, Germany, Italy, and the United States, while section \ref{Conclusion} concludes. The appendices include proofs for consistency and asymptotic normality of the CSS estimator as well as a Monte Carlo study on the finite sample properties. 
\section{A fractional unobserved components model for the contact rate}\label{model}
To motivate the estimation of the contact rate, consider the discrete SIR model, augmented to include deaths, which also forms the starting point of \citet[eqn.\ 1--4]{Pin2020} and \citet[eqn.\ 2.1]{LeeLiaSe2021}
\begin{align}
	1 &= S_t + I_t + D_t + R_t, 		\label{eqn:1}\\
	\Delta I_t &= \beta_t S_{t-1} I_{t-1} - \gamma I_{t-1}, 		\label{eqn:2}\\
	\Delta D_t &= \gamma_d I_{t-1}, 		\label{eqn:3}\\
	\Delta R_t &= \gamma_r I_{t-1}. 		\label{eqn:4}
\end{align}
% OK
In \eqref{eqn:1}, the (initial) population size, normalized to be one, is decomposed into $S_t$, the proportion of the population susceptible in $t$, $I_t$, the fraction of the population infected in $t$, $D_t$, the fraction that has died until $t$, and $R_t$, the proportion that has recovered until $t$. In \eqref{eqn:2}, $\gamma = \gamma_d + \gamma_r$ denotes the rate at which infected either die, see \eqref{eqn:3}, or recover, see \eqref{eqn:4}, and obviously $\gamma_d, \gamma_r \geq 0$. Thus, $\gamma I_{t-1}$ denotes the fraction of outflows of infected at $t$. The fraction of new infections at $t$ is captured by $\beta_t S_{t-1}I_{t-1}$, where $S_{t-1}I_{t-1}$ can be interpreted as the average probability of a contact being between a susceptible subject and an infected subject.
$\beta_t > 0$ is called the contact (or transmission) rate. It equals the average number of contacts per person per time unit multiplied by the probability of disease transmission between a susceptible and an infectious person \citep{Het2000}. %As the latter should only depend on virus-specific characteristics, the contact rate captures social behavior on aggregate. % OK
As in \cite{LeeLiaSe2021}, the contact rate is allowed to be time-varying. This reflects social behavior to change over time, e.g.\ in response to policy changes or to novel information on the pandemic. Since the contact rate determines inflows into infected, see \eqref{eqn:2}, it is the key variable tackled by social distancing policies.
% OK

Based on the contact rate, the reproduction rate $\mathcal{R}_{ t} = \beta_t/\gamma$ can be derived. It is the average number of infections caused by an infected subject during the infectious period $1/\gamma$ at the early stage of the pandemic (where $S_{t-1} \approx 1$). $\mathcal{R}_{t}$ is an indicator for the current dynamics of the pandemic, as for $\mathcal{R}_{t} < 1$ outflows from infected exceed inflows, causing $\Delta I_t$ to converge, see \eqref{eqn:2} where $0 \leq S_{t-1}  \leq 1$. Thus, if policy seeks to contain the spread of COVID-19, then it must control the contact rate, which controls the reproduction rate $\mathcal{R}_t$.

As shown by \cite{LeeLiaSe2021}, from \eqref{eqn:1} to \eqref{eqn:4} a measurement for the contact rate $\beta_t$ can be obtained directly: Denote $C_t = I_t +  R_t + D_t$ as the fraction of confirmed cases (consisting of infected, recovered, and deceased cases) and use $\Delta C_t = \Delta I_t  + \Delta R_t + \Delta D_t$ together with \eqref{eqn:2} to \eqref{eqn:4} to obtain $\Delta C_t = \beta_t S_{t-1} I_{t-1} - \gamma I_{t-1} + (\gamma_d + \gamma_r) I_{t-1} =  \beta_t S_{t-1} I_{t-1} $. Solving for $\beta_t$ yields 
\begin{align}\label{eqn:5}
	\beta_t = \frac{\Delta C_t}{I_{t-1}S_{t-1}} =: Y_t,
\end{align}
see \citet[eqn.\ 2.2]{LeeLiaSe2021}. As argued there, if for each $t$ the data $(C_t, R_t, D_t)$ can be observed, then the time-varying contact rate can be calculated straightforwardly via \eqref{eqn:5} using $S_t = 1 - C_t$, as well as $I_t = C_t - R_t - D_t$. 

Unfortunately, reported case numbers for $C_t$, $R_t$, and $D_t$, such as the daily data from JHU CSSE used in the applications in section \ref{application}, suffer from measurement errors, see e.g.\ \citet{HorLiuSc2021}. In addition, they display a strong weekly seasonal pattern that is likely to be driven by a varying number of tests conducted over the different days of the week \citep{BerSelAg2020}. %OK
Under the assumption that $Y_t$ is measured with a proportionally constant error variance resulting from seasonality and measurement errors, one has the following structure for the natural logarithm of the observable $\tilde Y_t$.
\begin{assumption}[Multiplicative seasonal and measurement errors] \label{asu:1}
	For each $t$, the observable $\tilde Y_t$ satisfies
	\begin{align*}
	\log	\tilde Y_t = \log Y_t + \sum_{i=1}^7 \alpha_i s_{i, t} + u_t = \log \beta_t + \sum_{i=1}^7 \alpha_i s_{i, t} + u_t, \qquad t=1,...,n,
	\end{align*}
	with $Y_t$ as given in \eqref{eqn:5}. $s_{i,t}$ are seasonal dummies for $i=1,...,7$, that capture the weekly patterns of reported case numbers, $\sum_{i=1}^7 \alpha_i = 0$, and the measurement error $u_t\sim WN(0, \sigma_u^2)$ is white noise.
\end{assumption}
% OK
% Go ahead with unobserved components model
Assumption \ref{asu:1} specifies an unobserved components (UC) model where the observable noisy measurement $\log \tilde Y_t$ is decomposed into an unobservable measurement error $u_t$, seasonal components $\sum_{i=1}^7 \alpha_i s_{i, t}$, and the log contact rate $\log \beta_t$. The log specification accounts for a proportional impact of measurement errors and seasonality, and forces the contact rate to be strictly positive. 

As the different components are not separately identified, an additional assumption on the dynamic structure of the contact rate is required. Empirical models of COVID-19 case numbers have so far assumed $\log \beta_t$ to follow a piece-wise linear time trend with structural breaks, see \citet[][]{HarWaeWe2020, LeeLiaSe2021, LiuMooSc2021}. As an alternative, the UC literature suggests to model time-varying coefficients as random walks \citep[see][for an overview]{DurKoo2012}. Both specifications assume contact rate growth $\Delta \log \beta_t$ only to be contemporaneously affected either by structural breaks or by stochastic shocks, an assumption that is likely to be violated. %Social behavior is characterized by strong persistence, which generates habits, and can be expected to gradually adjust to new information both at the individual and the aggregate level. 
Reflecting that the persistence properties of social behavior, and thus of the contact rate, are unknown, assumption \ref{asu:2} specifies the log contact rate as a fractionally integrated process of unknown order $d$.

\begin{assumption}[Specification of the contact rate] \label{asu:2}
	The log contact rate follows a type II fractionally integrated process of order $d\in \mathbb{R_+}$,  denoted as $\log \beta_t \sim I(d)$, where 
	\begin{align*}
	 \log \beta_t = \mu + x_t, \qquad \Delta_+^d x_t = \eta_t, \qquad \eta_t \sim WN(0, \sigma_\eta^2), \qquad t=1,...,n,
	\end{align*}
	$\mu$ is an intercept, and the $\eta_t$ are white noise and are independent of the measurement error $u_t$. 
\end{assumption}
Under assumption \ref{asu:2}, the log contact rate $\log \beta_t$ is a stochastic long memory process generated by the shocks $\{\eta_i\}_{i=1}^t$. The shock $\eta_t$ models the information new in $t$, such as news reports or policy announcements. Social decisions, reflected in $\log \beta_t$, however may additionally depend on past information $\eta_{t-1},...,\eta_1$. Together, $\{\eta_i\}_{i=1}^t$ forms the information available at $t$, conditional on which social decisions, e.g.\ on whether to meet, are made. The specification takes into account that new information does not evolve deterministically, but appears as stochastic shocks, which cannot be captured by a deterministic specification as e.g.\ in \cite{LeeLiaSe2021}.

The degree of persistence of the log contact rate is determined by the integration order $d$, which controls for the persistent impact of past shocks via the fractional difference operator $\Delta^d_+$. 
The latter exhibits a polynomial expansion in the lag operator $L$ of order infinite
\begin{align}\label{fracdiff}
	\Delta^{d} &= (1-L)^{d} = \sum_{i = 0}^{\infty}\pi_{i}(d)L^{i},  \qquad
	\pi_{i}(d) = 
	\begin{cases}
		\frac{i-d-1}{i}\pi_{i-1}(d) &  i = 1, 2, ..., \\ 
		1										&   i = 0.
	\end{cases} 
\end{align}
The $+$-subscript denotes a truncation of an operator at $t \leq 0$, $\Delta_+^d x_t= \sum_{i=0}^{t-1}\pi_i(d) x_{t-i}$, which reflects the type II definition of fractionally integrated processes \citep{MarRob1999}.
For $d=1$ the log contact rate is a random walk, which follows from plugging $d=1$ into \eqref{fracdiff}. Consequently, assumption \ref{asu:2} encompasses the predominant specification in the UC literature. However, assumption \ref{asu:2} allows for a far more general dynamic impact of past shocks $\eta_1,...,\eta_t$ on $\log \beta_t$, as can be seen by plugging $x_t = \Delta_+^{-d}\eta_t$ into $\log \beta_t = \mu + x_t$, which gives 
\begin{align}\label{eqn:6}
	\log \beta_t = \mu + \Delta_+^{-d}\eta_t = \mu + \sum_{i=0}^{t-1} \pi_i(-d)\eta_{t-i}.
\end{align}
While a random walk is an unweighted sum of past shocks $\eta_1,...,\eta_t$, so that $\pi_i(-1)=1$ for all $i=1,...,t-1$, allowing $d \neq 1$ yields non-uniform weights of past shocks in the impulse response function of $\log \beta_t$ and thus a gradual adjustment of the log contact rate to new information. This reflects that social behavior adjusts step-wise to new information both at the individual and the aggregate level. As processing new information on the Coronavirus and revising individual decisions (e.g.\ meeting friends, traveling, working from home) takes time and evolves gradually, individuals can be expected to step-wise adjust their contacts in response to new information. Overall, individuals will react heterogeneously both in terms of speed and intensity to novel information: Some will anticipate new information faster than others, and the extent of reaction will depend on individual characteristics such as risk awareness and attitudes. Such gradual adjustments are well captured by assumption \ref{asu:2}, in particular when $1 < d < 2$: 
In that case, contact rate growth $\Delta \log \beta_t \sim I(d-1)$ is strongly persistent and mean-reverting, as will become apparent in the applications in section \ref{application}. Strong persistence reflects the gradual adjustment of social behavior to new information, while mean-reversion ensures an asymptotically declining impact of past information to today's contact rate growth.

%(iii) Individuals are predominantly assumed to adjust health-related behavior gradually to new information 
%\citep{ProVel1997}. In terms of the contact rate, this implies a gradual reduction or increase of contacts as new information becomes available (for instance, as new contact restrictions are imposed).
%In addition, individuals can be expected to react heterogeneously both in terms of speed and intensity to novel information: Some will anticipate new information faster than others, and the extent of reaction will depend on individual characteristics such as risk awareness. Due to the gradual impact of new information on behavior both at the individual and the aggregate level, the contact rate should exhibit rather smooth turning points instead of sharp breaks. Variation in incubation periods, reporting delays, and regional differences in the accessibility of tests in addition attribute to a gradual impact of changed behavior on reported case numbers. 

The remaining assumptions are imposed mainly for technical reasons. The type II definition of fractional integration  assumes zero starting values for the fractionally integrated process by truncating the polynomial expansion of the fractional difference operator, $\Delta_+^d x_t= \sum_{i=0}^{t-1}\pi_i(d) x_{t-i}$. It is required to treat the asymptotically stationary ($d<1/2$, from now on `stationary' for brevity) and the asymptotically nonstationary case ($d > 1/2$, from now on `nonstationary') alongside each other. While the type II definition may be a strong assumption for some time series, it is plausible for the contact rate, as we have data covering roughly the whole pandemic. Thus, the pre-sample shocks $\eta_i$, $i \leq 0$, should be zero. Independence of $u_t$ and $\eta_t$ follows from the characterization of $u_t$ as a measurement error that should not influence the contact rate. In general, the assumption can be relaxed to allow for  $\Corr(\eta_t, u_t) \neq 0$, as for instance in correlated UC models \citep{MorNelZi2003}, and will not affect the asymptotic results in section \ref{Ch:est}. The distributional assumptions on $\eta_t$ and $u_t$ are somewhat weaker than the assumption of Gaussian white noise on which UC models typically rely \citep{MorNelZi2003}. They will be shown to be largely satisfied in the applications of section \ref{application}. Finally, $d > 0$ is required to separately identify $\log \beta_t$ and $u_t$.

\section{Parameter estimation, filtering and smoothing}\label{Ch:est}
In this section, the fractional filter is derived. It is a computationally simple modification of the Kalman filter that avoids the Kalman recursions for the conditional variance. The modification is necessary, as the Kalman filter becomes computationally infeasible for UC models when the dimension of the state vector is high, as for fractionally integrated processes. 
The fractional filter provides a closed-form expression for the prediction error of the UC model. 
Based on that, a conditional-sum-of-squares (CSS) estimator for the model parameters is set up. It minimizes the sum of squared prediction errors obtained from the fractional filter. The CSS estimator is shown to be consistent and asymptotically normally distributed. Given the CSS parameter estimates, the log contact rate can be estimated by the fractional filter given the full sample information. Finally, estimation of the mean and seasonal components is considered.

Under assumptions \ref{asu:1} and \ref{asu:2}, the fractional UC model is given by 
\begin{align}\label{eq:mod}
	\log \tilde Y_t = \log \beta_t + \sum_{i=1}^7 \alpha_i s_{i, t} + u_t, \qquad \log \beta_t = \mu + x_t, \qquad x_t = \Delta_+^{-d} \eta_t, \qquad t=1,...,n.
\end{align}
%OK
 Denote $\mu_0, \alpha_{1, 0},...,\alpha_{7, 0}, d_0, \sigma_{\eta, 0}^2, \sigma_{u, 0}^2$ as the true parameters of the data-generating mechanism. Leaving aside the deterministic terms for the moment, by defining $y_t = \log \tilde Y_t - \mu - \sum_{i=1}^7 \alpha_i s_{i, t}$, the stochastic part of the fractional UC model \eqref{eq:mod} is
\begin{align}\label{eq:mod2}
y_t =x_t + u_t, \qquad \qquad x_t = \Delta_+^{-d} \eta_t, \qquad t=1,...,n.
\end{align}
%OK
In the following, let $\vtheta = (d, \sigma_\eta^2, \sigma_{u}^2)' \in \mTheta$ denote the vector holding the parameters of \eqref{eq:mod2}, and let $\vtheta_0 = (d_0, \sigma_{\eta,0}^2, \sigma_{u, 0}^2)' \in \mTheta$, where $\mTheta = \mD \times \mOmega_\eta \times \mOmega_u$ denotes the parameter space with $\mD = \{d \in \mathbb{R} | 0 < d \leq d_{max}\}$ and $\mOmega_i = \{\sigma_{i}^2 \in \mathbb{R} | 0 < \sigma_i^2 < \infty\}$, $i = \eta, u$. Define $\mathcal{F}_t$ as the $\sigma$-algebra generated by $y_1,...,y_t$, and let the expected value operator $\E_\vtheta(z_t)$ of an arbitrary random variable $z_t$ denote that expectation is taken with respect to the distribution of $z_t$ given $\vtheta$, so that $\E_{\vtheta_0}(z_t) = \E(z_t)$. Furthermore, let $\mSigma^{(i, j)}$ denote the $(i, j)$-th entry of an arbitrary matrix $\mSigma$.

Estimation of the parameters $\vtheta_0$ is carried out by the CSS estimator that minimizes the sum of squared prediction errors of model \eqref{eq:mod2}. The prediction error is defined as the one-step ahead forecast error of $y_{t+1}$ given $\mathcal{F}_t$
\begin{align}\label{eqn:v}
	v_{t+1}(\vtheta) = y_{t+1} - \E_\vtheta(y_{t+1} | \mathcal{F}_{t}) =  y_{t+1} - \E_\vtheta(x_{t+1} | \mathcal{F}_{t}).
\end{align}
It depends on $\E_\vtheta(x_{t+1} | \mathcal{F}_{t})$, for which the fractional filter provides an analytical solution. The filter is introduced in the following lemma.
\begin{lemma}[Fractional filter for $x_{t+1}$ given $\mathcal{F}_t$]\label{L:Kalman}
	Under assumptions \ref{asu:1} and \ref{asu:2}
	\begin{align*}
		\E_\vtheta(x_{t+1}| \mathcal{F}_t) = \sum_{i=1}^{t} \pi_i(-d) \mSigma_{\eta_{t:1}y_{t:1}}^{(i, \cdot)} \mSigma_{y_{t:1}}^{-1} y_{t:1},
	\end{align*}
	where $y_{t:1} = (y_t,...,y_1)'$, $\eta_{t:1} = (\eta_t,...,\eta_1)'$, $ \mSigma_{\eta_{t:1}y_{t:1}} = \Cov_\vtheta(\eta_{t:1}, y_{t:1})$, and  $\mSigma_{y_{t:1}} = \Var_\vtheta(y_{t:1})$. The superscript in $\mSigma_{\eta_{t:1}y_{t:1}}^{(i, \cdot)}$ denotes the $i$-th row of the matrix, and 
	\begin{align*}
		\mSigma_{\eta_{t:1} y_{t:1}}^{(i, j)} = \begin{cases}
		\pi_{i-j}(-d)\sigma_\eta^2 & \text{if } i \geq j, \\
		0 & \text{else,}
		\end{cases} \qquad 
		\mSigma_{y_{t:1}}^{(i, j)} = \begin{cases}
		\sigma_u^2 + \sigma_\eta^2 \sum_{k=0}^{t-i} \pi_k^2(-d) & \text{if } i = j,\\
		\sigma_\eta^2 \sum_{k=0}^{t-\max(i, j)} \pi_k(-d) \pi_{k + |i - j|}(-d)& \text{else.}
		\end{cases}  
	\end{align*}
\end{lemma}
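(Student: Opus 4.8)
The plan is to build on the moving-average representation \eqref{eqn:6} of $x_t$ in the information shocks $\eta_t$, to peel off the shock $\eta_{t+1}$ that is new in $t+1$ (and hence orthogonal to $\mathcal{F}_t$), and then to reduce the remaining object to a finite collection of linear projections whose first and second moments can be written down in closed form. Throughout I treat $\E_\vtheta(\cdot\,|\,\mathcal{F}_t)$ as the best linear predictor of the argument given $y_1,\dots,y_t$, which is the relevant notion for the CSS objective and which coincides with the conditional expectation under the Gaussian pseudo-model underlying \eqref{eq:mod2}.

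First I would expand $x_{t+1} = \Delta_+^{-d}\eta_{t+1} = \sum_{i=0}^{t}\pi_i(-d)\eta_{t+1-i}$ and split off the $i=0$ term,
\begin{align*}
x_{t+1} = \pi_0(-d)\,\eta_{t+1} + \sum_{i=1}^{t}\pi_i(-d)\,\eta_{t+1-i}.
\end{align*}
Because the type II truncation in \eqref{fracdiff} makes each $y_s$, $s\le t$, a linear combination of $\eta_1,\dots,\eta_s,u_1,\dots,u_s$ only, and because $\eta_{t+1}$ is white noise independent of $\eta_1,\dots,\eta_t$ and of $\{u_s\}$ (Assumption \ref{asu:2}), $\eta_{t+1}$ is orthogonal to $\mathcal{F}_t$ and to the constant, so $\E_\vtheta(\eta_{t+1}\,|\,\mathcal{F}_t)=0$. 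Linearity of the projection then gives
\begin{align*}
\E_\vtheta(x_{t+1}\,|\,\mathcal{F}_t) = \sum_{i=1}^{t}\pi_i(-d)\,\E_\vtheta(\eta_{t+1-i}\,|\,\mathcal{F}_t).
\end{align*}
With the reversed stacking $\eta_{t:1}=(\eta_t,\dots,\eta_1)'$ and $y_{t:1}=(y_t,\dots,y_1)'$, the shock $\eta_{t+1-i}$ is exactly the $i$-th entry of $\eta_{t:1}$ for $i=1,\dots,t$, both $\eta_{t:1}$ and $y_{t:1}$ are mean zero, and $\mSigma_{y_{t:1}}=\Var_\vtheta(y_{t:1})$ is invertible since $\sigma_u^2>0$ on $\mTheta$. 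Hence the projection of the $i$-th coordinate is $\E_\vtheta(\eta_{t+1-i}\,|\,\mathcal{F}_t)=\mSigma_{\eta_{t:1}y_{t:1}}^{(i,\cdot)}\mSigma_{y_{t:1}}^{-1}y_{t:1}$, and substituting into the last display yields the asserted formula.

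It then remains to verify the two closed forms. Writing $y_{t+1-j}=\sum_{k=0}^{t-j}\pi_k(-d)\eta_{t+1-j-k}+u_{t+1-j}$ and using $\E_\vtheta(\eta_s\eta_r)=\sigma_\eta^2\,\mathbf{1}\{s=r\}$, $\E_\vtheta(u_su_r)=\sigma_u^2\,\mathbf{1}\{s=r\}$, $\Cov_\vtheta(\eta_s,u_r)=0$: for $\mSigma_{\eta_{t:1}y_{t:1}}^{(i,j)}=\Cov_\vtheta(\eta_{t+1-i},y_{t+1-j})$ the index match forces $k=i-j$, which lies in $\{0,\dots,t-j\}$ iff $i\ge j$, giving $\pi_{i-j}(-d)\sigma_\eta^2$ and $0$ otherwise; for $\mSigma_{y_{t:1}}^{(i,j)}=\Cov_\vtheta(y_{t+1-i},y_{t+1-j})$ the double sum collapses along $l=k+|i-j|$ over $k=0,\dots,t-\max(i,j)$, producing $\sigma_\eta^2\sum_{k}\pi_k(-d)\pi_{k+|i-j|}(-d)$, plus $\sigma_u^2$ on the diagonal. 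These are precisely the stated expressions.

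I expect the proof itself to be routine; the only points requiring care are the interpretation of $\E_\vtheta(\cdot\,|\,\mathcal{F}_t)$ as a linear predictor (so that every identity above involves only first and second moments and therefore holds verbatim without a distributional assumption beyond white noise), and the careful bookkeeping of the reversed indices in $\eta_{t:1}$ and $y_{t:1}$ together with the summation ranges, where one must check that the matched index stays in $\{0,\dots,t-\max(i,j)\}$ — this is exactly where the type II truncation enters.
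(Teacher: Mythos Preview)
Your proof is correct and reaches the stated formula, but it follows a genuinely different (and more elementary) route than the paper. The paper first passes to the reduced form $\Delta_+^d y_t = \phi(L,\vtheta)\varepsilon_t$ with white-noise innovation $\varepsilon_t$, inverts $\phi(L,\vtheta)$ to write $\varepsilon_t = y_t - \sum_i A_i(\vtheta) y_{t-i}$, and then identifies the coefficients $A_i(\vtheta)$ via the Yule--Walker equations, which delivers $\E_\vtheta(y_{t+1}\mid\mathcal{F}_t)=\Cov_\vtheta(y_{t+1},y_{t:1})\Var_\vtheta(y_{t:1})^{-1}y_{t:1}$; only in the final step does it expand $\Cov_\vtheta(y_{t+1},y_{t:1})=\sum_{i=1}^t\pi_i(-d)\Cov_\vtheta(\eta_{t+1-i},y_{t:1})$. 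You instead decompose $x_{t+1}$ in the $\eta$'s first, peel off $\eta_{t+1}$, and apply the projection formula coordinate-wise to $\eta_{t:1}$, never invoking the reduced form or Yule--Walker at all.

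Your approach buys simplicity and self-containment: it uses nothing beyond linearity of projection and the explicit MA representation \eqref{eqn:6}, and it makes the role of the type~II truncation completely transparent. The paper's route, by contrast, front-loads the reduced-form representation $\phi(L,\vtheta)\varepsilon_t$ and its invertibility, which are not needed for the lemma itself but are reused heavily in the consistency and asymptotic-normality proofs that follow; so there is an economy-of-exposition argument for the paper's choice. Your explicit remark that $\E_\vtheta(\cdot\mid\mathcal{F}_t)$ is to be read as the best linear predictor is a useful clarification that the paper leaves implicit.
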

The proof is contained in appendix \ref{app_proofs}. 
As can be seen from lemma \ref{L:Kalman}, the fractional filter provides a solution for $\E_\vtheta(x_{t+1}| \mathcal{F}_t)$ that only depends on $\vtheta$ and $y_{1},...,y_t$. 
By plugging it into \eqref{eqn:v}, one has the closed-form expression for the prediction error 
\begin{align}\label{eq:prederr}
	v_{t+1}(\vtheta) = y_{t+1} - \sum_{i=1}^{t} \pi_i(-d) \mSigma_{\eta_{t:1}y_{t:1}}^{(i, \cdot)} \mSigma_{y_{t:1}}^{-1} y_{t:1}.
\end{align}
Based on \eqref{eq:prederr} the objective function of the CSS estimator for $\vtheta_0$ is set up
\begin{align}\label{eq:QML}
	\hat \vtheta = \arg \min_{\vtheta \in \mTheta} \frac{1}{n} \sum_{t=1}^n v_t^2(\vtheta).
\end{align}
Note that estimating the parameters of the fractional UC model via the CSS estimator \eqref{eq:QML} in combination with the fractional filter deviates from the methodological state space literature: There, expectation and variance of $x_{t+1}$ conditional on $\mathcal{F}_t$ are typically obtained from the Kalman filter recursions \citep[see e.g.][ch.\ 4.3]{DurKoo2012}. The resulting prediction error and its conditional variance then enter the Gaussian \mbox{(quasi-)likelihood} function that is maximized to estimate $\vtheta_0$. However, the Kalman filter becomes computationally infeasible when the dimension of the state vector is high, as for fractionally integrated processes. Thus, a computationally simpler filter is required. The fractional filter, as defined in lemma \ref{L:Kalman}, is a modification of the Kalman filter: Its solution for $\E_\vtheta(x_{t+1}| \mathcal{F}_t)$ is identical to the Kalman filter \citep[see][ch.\ 4.2]{DurKoo2012}, but it avoids the Kalman recursions for the conditional variance of $x_{t+1}$. While the conditional variance is necessary for \mbox{(quasi-)maximum} likelihood estimation, the CSS estimator only requires a closed-form expression for the prediction error, for which the fractional filter is sufficient. The objective function of the CSS estimator in \eqref{eq:QML} is of course proportional to the exponent in the conditional Gaussian (quasi-)likelihood function. However, CSS estimation is computationally much simpler due to the fractional filter. Together, the fractional filter and the CSS estimator provide a computationally feasible alternative to the Kalman filter and the \mbox{(quasi-)maximum} likelihood estimator, particularly for UC models with richer long-run dynamics. 

While the asymptotic theory of the CSS estimator is well established for ARFIMA models, see \citet{HuaRob2011} and \citet{Nie2015}, it has not yet been derived for structural UC models. To fill this gap, theorems \ref{th:cons} and \ref{th:norm} summarize the asymptotic estimation theory for the CSS estimator for fractional UC models. In addition, the finite sample properties are addressed by a Monte Carlo study in appendix \ref{MC}. 
For consistency and asymptotic normality of the CSS estimator, the moment assumptions on the shocks $\eta_t$, $u_t$ need to be strengthened.

\begin{assumption}[Higher moments of $\eta_t, u_t$]\label{asu:3}
	The conditional moments of $\eta_t$, $u_t$ (conditional on past $\eta_{t-1}, \eta_{t-2},...$, and $u_{t-1}, u_{t-2},...$) are finite up to order four and equal the unconditional moments.
\end{assumption}

\begin{theorem}[Consistency]\label{th:cons}
	Under assumptions \ref{asu:1} to \ref{asu:3} the CSS estimator $\hat \vtheta$ is consistent, $\hat \vtheta \pto \vtheta_0$ as $n \to \infty$.
\end{theorem}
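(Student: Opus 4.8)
\medskip
\noindent\textbf{Proof proposal.} Since $\hat\vtheta$ is an extremum estimator, I would use the classical scheme: pin down the in-probability limit of the criterion $Q_n(\vtheta):=n^{-1}\sum_{t=1}^{n}v_t^2(\vtheta)$, show it is uniquely minimized at $\vtheta_0$, and supply the uniformity needed to pass this to the minimizer; because the data are fractionally integrated of possibly nonstationary order, the last ingredient requires the two-region treatment used for ARFIMA CSS estimation in \citet{HuaRob2011} and \citet{Nie2015}. The natural first move is a reduction to a stationary reduced form. The operator $\Delta_+^{d_0}$ is triangular with unit diagonal, so $\mathcal{F}_t=\sigma(y_1,\dots,y_t)=\sigma(w_1,\dots,w_t)$ for $w_t:=\Delta_+^{d_0}y_t=\eta_t+\Delta_+^{d_0}u_t$, a stationary, invertible process with spectral density $f_0(\lambda)=(2\pi)^{-1}\bigl(\sigma_{\eta,0}^2+\sigma_{u,0}^2\,|1-e^{\iu\lambda}|^{2d_0}\bigr)$ that is bounded and bounded away from zero because $\sigma_{\eta,0}^2>0$. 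As $w_t-y_t$ depends only on $y_1,\dots,y_{t-1}$ and hence is $\mathcal{F}_{t-1}$-measurable, \eqref{eqn:v} yields $v_t(\vtheta)=w_t-\E_\vtheta(w_t\mid\mathcal{F}_{t-1})$: the prediction error is the one-step forecast error of $w_t$ under the (mis-specified) linear model indexed by $\vtheta$, whose implied spectral density for $w_t$ is $f_w(\lambda;\vtheta)=(2\pi)^{-1}\bigl(\sigma_\eta^2\,|1-e^{\iu\lambda}|^{2(d_0-d)}+\sigma_u^2\,|1-e^{\iu\lambda}|^{2d_0}\bigr)$. Using Lemma~\ref{L:Kalman} (or the Wiener--Kolmogorov filter directly), $v_t(\vtheta)$ is the truncation of a steady-state prewhitening filter $\bar v_t(\vtheta)=\sum_{j\geq0}b_j(\vtheta)w_{t-j}$ with $b_0(\vtheta)=1$ and $\sum_j z^j b_j(\vtheta)=1/\kappa(z;\vtheta)$, where $\kappa(\cdot;\vtheta)$ is the normalized minimum-phase Wold factor of $f_w(\cdot;\vtheta)$; this factorization is well defined exactly when $|d-d_0|<1/2$.

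\medskip
\noindent On a small closed ball $N_0\subset\mTheta$ around $\vtheta_0$ I would then: (i) bound $\E_{\vtheta_0}|v_t(\vtheta)-\bar v_t(\vtheta)|^2$ uniformly in $\vtheta\in N_0$ and show that their average is $o(1)$, so that $Q_n$ may be replaced by $n^{-1}\sum_{t=1}^n \bar v_t^2(\vtheta)$; (ii) apply a law of large numbers for quadratic forms in the linear process $\bar v_t(\vtheta)$ — Assumption~\ref{asu:3} provides exactly the martingale-difference structure and finite fourth moments of $(\eta_t,u_t)$ needed to obtain $n^{-1}\sum_t\bar v_t^2(\vtheta)\pto Q(\vtheta):=\lim_{t\to\infty}\E_{\vtheta_0}\bar v_t^2(\vtheta)$ for each $\vtheta\in N_0$; and (iii) strengthen this to a uniform law of large numbers, $\sup_{\vtheta\in N_0}|Q_n(\vtheta)-Q(\vtheta)|\pto0$, using that $b_j(\vtheta)$ and its $\vtheta$-gradient are summable in $j$ uniformly on $N_0$, which yields stochastic equicontinuity. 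For identification I would invoke that $\E_{\vtheta_0}(w_t\mid\mathcal{F}_{t-1})$ is the $L^2$-best $\mathcal{F}_{t-1}$-measurable predictor of $w_t$, so $Q(\vtheta)\geq Q(\vtheta_0)$ for every $\vtheta$, with equality only if the $\vtheta$-implied best linear predictor of $w_t$ coincides with the true one; in spectral terms this forces $f_w(\cdot;\vtheta)$ to be proportional to $f_0$, and comparing the two sides as $\lambda\to0$ then forces $d=d_0$ — this is precisely where $d_0>0$ (Assumption~\ref{asu:2}) enters, as it is what disentangles the fractional signal from the white-noise measurement error — after which matching the remaining terms identifies the other parameters (the innovation scale entering, as in ARFIMA CSS, only through the signal-to-noise ratio and being recovered from the minimized criterion value). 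Thus $Q$ is continuous on $N_0$ and uniquely minimized at $\vtheta_0$.

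\medskip
\noindent To rule out $\hat\vtheta$ leaving $N_0$ I would follow \citet{HuaRob2011}: split $\mTheta\setminus N_0$ into (a) the ``under-differencing'' region $\{\vtheta\in\mTheta:d\leq d_0-1/2\}$ (nonempty only if $d_0>1/2$), on which $v_t(\vtheta)$ is a truncated fractional integral of $w_t$ with variance of order $t^{2(d_0-d)-1}\to\infty$, so that $Q_n(\vtheta)$ is bounded below by a deterministic sequence diverging to $+\infty$ uniformly on that set, by type~II partial-sum limit theory; and (b) the complement, on which I would extend the uniform law of large numbers after compactifying $\mTheta$ — near the open boundaries of $\mOmega_\eta$ and $\mOmega_u$ the limiting criterion either diverges or remains strictly above $Q(\vtheta_0)$ by the identification argument — so that $Q_n\pto Q$ uniformly there and $Q(\vtheta)>Q(\vtheta_0)$ throughout. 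Combining (a) and (b), $\inf_{\vtheta\in\mTheta\setminus N_0}Q_n(\vtheta)>Q(\vtheta_0)+\delta$ with probability approaching one for some $\delta>0$, while $Q_n(\vtheta_0)\to Q(\vtheta_0)$; hence $\hat\vtheta\in N_0$ eventually, and the uniform convergence together with the unique minimization of $Q$ on $N_0$ delivers $\hat\vtheta\pto\vtheta_0$ by the usual argument for extremum estimators.

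\medskip
\noindent The algebraic reduction to $w_t$ and the law of large numbers for quadratic forms in linear processes are routine. The main obstacle is the control of $v_t(\vtheta)$ \emph{uniformly} over the non-compact parameter space: proving that $Q_n$ diverges uniformly on the under-differencing region through type~II partial-sum asymptotics, and handling the open boundaries of $\mOmega_\eta$ and $\mOmega_u$. Relative to the ARFIMA case of \citet{HuaRob2011} and \citet{Nie2015}, this is complicated by the signal-plus-noise structure, which makes the $\vtheta$-implied spectral density $f_w(\cdot;\vtheta)$ depend on $d$ through two exponents rather than one, so that the prewhitening coefficients $b_j(\vtheta)$ and their near-boundary behaviour must be re-derived rather than taken over from the ARFIMA literature.
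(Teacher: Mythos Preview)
Your high-level strategy coincides with the paper's: both treat $\hat\vtheta$ as an extremum estimator, split $\mTheta$ into the region where the prediction error is asymptotically stationary and its complement, invoke \citet{HuaRob2011}/\citet{Nie2015} to show the nonstationary region is asymptotically irrelevant, and then establish a uniform law of large numbers on the stationary region via a gradient bound. The substantive difference is in how the prediction error is parametrized. You work spectrally with $f_w(\cdot;\vtheta)$ and its Wold factor $\kappa$, so the prewhitening weights $b_j(\vtheta)$ absorb both the short-memory filter and the fractional operator $\Delta^{d-d_0}$; this is why you (correctly) flag uniform control of $b_j(\vtheta)$ as the main obstacle. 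The paper instead writes $v_t(\vtheta)=\psi_+(L,\vtheta)\Delta_+^d y_t$ with $\psi(L,\vtheta)=\phi(L,\vtheta)^{-1}$ the inverse Wold polynomial of the \emph{bounded-spectrum} process $\eta_t+\Delta_+^d u_t$, thereby cleanly separating the absolutely summable $\psi$ from the fractional piece $\Delta^{d-d_0}$. That separation lets the paper invoke \citet[lemma~B.3]{Nie2015} off the shelf for the fractional part and reduces the uniform gradient bound to proving absolute summability of $\partial^k\phi_i(\vtheta)/\partial\vtheta^k$, $k\le3$, which is supplied in a dedicated lemma; your lumped $b_j(\vtheta)$ would require re-deriving the analogous bounds from scratch, which is feasible but less direct. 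Two smaller points: your identification argument via the behavior of $f_w(\lambda;\vtheta)$ at $\lambda\to0$ is different from---and arguably cleaner than---the paper's algebraic $2\times2$ covariance-matching, though both hinge on $d_0>0$; and your observation that the CSS criterion identifies only $(d,\sigma_\eta^2/\sigma_u^2)$, with the overall scale recovered from the minimized value, is correct and more explicit than the paper's treatment of this point.
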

The proof of theorem \ref{th:cons} is given in appendix \ref{app_proofs} and is carried out as follows: First, the model in \eqref{eq:mod2} is shown to be identified. Next, $v_t(\vtheta)$, as given in \eqref{eq:prederr}, is shown to be integrated of order $d_0 - d$, and thus is stationary for $d_0 - d < 1/2$ and nonstationary for $d_0 - d > 1/2$. As the asymptotic behavior of the objective function changes around the point $d_0 - d = 1/2$, the objective function does not uniformly converge in probability on $\mTheta$. Adopting the results of \cite{HuaRob2011} and \cite{Nie2015}, who show for ARFIMA models encompassing the reduced form of \eqref{eq:mod2} that the probability of the CSS estimator to stay in the region of the parameter space where $v_t(\vtheta)$ is nonstationary is asymptotically zero, the relevant region of $\mTheta$ asymptotically reduces to the region where $d_0 - d < 1/2$ holds. Within the relevant region of the parameter space this paper then proves weak convergence of the objective function by showing the objective function to satisfy a uniform weak law of large numbers. This yields consistency of the CSS estimator, see \citet[thm.\ 4.3]{Woo1994}.

\begin{theorem}[Asymptotic normality]\label{th:norm}
	Under assumptions \ref{asu:1} to \ref{asu:3} the CSS estimator $\hat \vtheta$ is asymptotically normally distributed, $\sqrt{n}(\hat \vtheta - \vtheta_0) \dto \mathrm{N}(0, \mOmega^{-1}_0)$ as $n \to \infty$. 
\end{theorem}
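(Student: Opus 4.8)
The plan is to treat $\hat\vtheta$ as a standard extremum estimator and run the usual score--Hessian argument, using Theorem \ref{th:cons} to localise around $\vtheta_0$. Write $Q_n(\vtheta)=n^{-1}\sum_{t=1}^n v_t^2(\vtheta)$ for the objective in \eqref{eq:QML}, with $v_t(\vtheta)$ the closed-form prediction error \eqref{eq:prederr}. Since $\hat\vtheta\pto\vtheta_0$ and $\vtheta_0$ is interior to $\mTheta$, with probability approaching one $\hat\vtheta$ solves $\nabla_\vtheta Q_n(\hat\vtheta)=0$, and a mean-value expansion of the score gives
\begin{align*}
\sqrt n\,(\hat\vtheta-\vtheta_0)=-\big[\nabla^2_\vtheta Q_n(\bar\vtheta)\big]^{-1}\sqrt n\,\nabla_\vtheta Q_n(\vtheta_0),
\end{align*}
with $\bar\vtheta$ (row by row) between $\hat\vtheta$ and $\vtheta_0$. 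The theorem then reduces to (i) a central limit theorem for the score, $\sqrt n\,\nabla_\vtheta Q_n(\vtheta_0)\dto\mathrm N(0,\mB_0)$; (ii) a uniform law of large numbers for the Hessian on a shrinking neighbourhood $N$ of $\vtheta_0$, $\sup_{\vtheta\in N}\|\nabla^2_\vtheta Q_n(\vtheta)-\mA(\vtheta)\|\pto 0$ with $\mA(\cdot)$ continuous; and (iii) nonsingularity of $\mA_0:=\mA(\vtheta_0)$. Given (i)--(iii) and $\bar\vtheta\pto\vtheta_0$, the expansion delivers $\sqrt n(\hat\vtheta-\vtheta_0)\dto\mathrm N(0,\mA_0^{-1}\mB_0\mA_0^{-1})$; under the conditional-homoskedasticity part of Assumption \ref{asu:3} the sandwich collapses to the inverse of a single matrix $\mOmega_0$, exactly as for the CSS estimator of ARFIMA models (see \citet{HuaRob2011} and \citet{Nie2015}).

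The score is $\nabla_\vtheta Q_n(\vtheta)=2n^{-1}\sum_{t=1}^n v_t(\vtheta)\dot v_t(\vtheta)$, with $\dot v_t(\vtheta)=\nabla_\vtheta v_t(\vtheta)$ obtained by differentiating \eqref{eq:prederr} in $d,\sigma_\eta^2,\sigma_u^2$; this hits the coefficients $\pi_i(-d)$ and the entries of $\mSigma_{\eta_{t:1}y_{t:1}}$, $\mSigma_{y_{t:1}}$ from Lemma \ref{L:Kalman}, and hence the inverse $\mSigma_{y_{t:1}}^{-1}$. At $\vtheta_0$ the prediction error $v_t(\vtheta_0)=y_t-\E(y_t|\calF_{t-1})$ is a martingale difference with respect to $\calF_{t-1}$, while $\dot v_t(\vtheta_0)$ depends only on $y_1,\dots,y_{t-1}$ and is $\calF_{t-1}$-measurable, so $\{2v_t(\vtheta_0)\dot v_t(\vtheta_0),\calF_t\}$ is a martingale difference array to which I would apply a martingale CLT (equivalently, and more robustly against non-Gaussian shocks, one writes the score as a bilinear form in the structural martingale-difference innovations $(\eta_t,u_t)'$ and applies a CLT for such forms). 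The two conditions to check---convergence of the conditional variance $4n^{-1}\sum_t\E[v_t^2(\vtheta_0)|\calF_{t-1}]\,\dot v_t(\vtheta_0)\dot v_t(\vtheta_0)'\pto\mB_0$, and a conditional Lindeberg condition---follow from Assumption \ref{asu:3}: conditional second moments equal the unconditional ones, so $\E[v_t^2(\vtheta_0)|\calF_{t-1}]$ is the deterministic, Ces\`aro-convergent prediction-error variance, while the fourth-moment part delivers the Lindeberg bound once one verifies $\dot v_t(\vtheta_0)\in L^4$. The latter rests on a technical input familiar from CSS theory: the $d$-derivative brings in a $\log(1-L)$-type operator, so in terms of the innovations $\dot v_t(\vtheta_0)=\sum_{j\ge1}c_j\epsilon_{t-j}$ with $|c_j|=O(j^{-1})$---square-summable, hence enough moments, but decaying only at the borderline rate, so the coefficient bounds of \citet{HuaRob2011} and \citet{Nie2015} must be carried over.

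For the Hessian, $\nabla^2_\vtheta Q_n(\vtheta)=2n^{-1}\sum_t\big[\dot v_t(\vtheta)\dot v_t(\vtheta)'+v_t(\vtheta)\ddot v_t(\vtheta)\big]$ with $\ddot v_t(\vtheta)=\nabla^2_\vtheta v_t(\vtheta)$. Evaluated at $\vtheta_0$ the second sum has zero mean (again $v_t(\vtheta_0)$ is orthogonal to the $\calF_{t-1}$-measurable $\ddot v_t(\vtheta_0)$), so $\mA_0=2\lim_n n^{-1}\sum_t\E[\dot v_t(\vtheta_0)\dot v_t(\vtheta_0)']$; its nonsingularity is inherited from the identification argument already used for Theorem \ref{th:cons} (the limiting objective is locally quadratic at $\vtheta_0$ with nondegenerate curvature). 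Pointwise convergence at $\vtheta_0$ is a law of large numbers; lifting this to uniform convergence on $N$ needs bounds on the third-order $\vtheta$-derivatives of $v_t$ plus a stochastic-equicontinuity argument, using Lipschitz estimates for $\pi_i(-d)$ and for $\mSigma_{y_{t:1}}^{-1}$ that are uniform over $d$ bounded away from $d_0+1/2$. This restriction is harmless because Theorem \ref{th:cons} confines $\hat\vtheta$, and hence $\bar\vtheta$, to a neighbourhood of $\vtheta_0$, where $v_t(\vtheta)$ stays in the ``stationary'' region $d_0-d<1/2$ and the objective is regular; the non-uniform behaviour flagged in the proof of Theorem \ref{th:cons} plays no role here.

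The main obstacle, exactly as in the ARFIMA CSS literature, is the uniform control of the $d$-derivatives of the fractional filter: one must show that differentiating the infinite-order operator $\Delta_+^{-d}$ and the growing matrices $\mSigma_{\eta_{t:1}y_{t:1}}$, $\mSigma_{y_{t:1}}^{-1}$ produces processes with uniformly bounded fourth moments and summable (suitably differenced) autocovariances throughout $N$, despite the derivative coefficients decaying only at the rate $j^{-1}$. With those bounds in place the martingale-CLT and uniform-LLN conditions can be verified, and assembling the mean-value expansion with (i)--(iii) yields $\sqrt n(\hat\vtheta-\vtheta_0)\dto\mathrm N(0,\mOmega_0^{-1})$.
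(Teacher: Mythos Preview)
Your high-level architecture---Taylor expansion of the score, martingale-difference CLT for the score at $\vtheta_0$, and a ULLN for the Hessian obtained by controlling third-order $\vtheta$-derivatives---is exactly what the paper does, and your identification of the $O(j^{-1})$ decay of the $d$-derivative coefficients as the key technical hurdle is correct. The sandwich $\mA_0^{-1}\mB_0\mA_0^{-1}=\mOmega_0^{-1}$ also collapses in the paper for the same reason you give.

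The genuine difference is the \emph{representation} in which the derivative bounds are established. You propose to differentiate the matrix form of the fractional filter, i.e.\ $\pi_i(-d)$, $\mSigma_{\eta_{t:1}y_{t:1}}$ and $\mSigma_{y_{t:1}}^{-1}$, and then argue uniform $L^4$ bounds on the resulting processes. The paper instead exploits the identity $v_t(\vtheta)=\varepsilon_t(\vtheta)=\psi_+(L,\vtheta)\Delta_+^d y_t$ (established in the consistency proof) and works entirely in the reduced ARFIMA form: it (i) replaces the truncated residual and its derivatives by their untruncated, stationary counterparts $\tilde\varepsilon_t(\vtheta)$ via the approximation argument of \citet[pp.\ 135--136]{Rob2006}, so that a CLT for \emph{stationary} MDS applies; and (ii) reduces all derivative bounds to absolute summability of $\partial^k\phi_i(\vtheta)/\partial\vtheta^k$, $k=1,2,3$, proved in a separate lemma using the $O(i^{-d-1}(1+\log i)^k)$ estimate for $\partial^k\pi_i(d)/\partial d^k$ from \citet[lemma B.3]{JohNie2010}. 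This lets the paper invoke \citet[lemma B.3]{Nie2015} directly for the ULLN, rather than having to develop Lipschitz control of the growing inverses $\mSigma_{y_{t:1}}^{-1}$.

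Your route would work, but it is heavier: differentiating $\mSigma_{y_{t:1}}^{-1}$ uniformly in $t$ and $\vtheta$ is essentially re-deriving the reduced-form coefficient bounds in disguise, and without the truncated-to-untruncated approximation you must run the martingale CLT for a nonstationary array (doable, but an extra step you do not flag). The paper's reduced-form detour buys a direct transplant of the ARFIMA CSS machinery and isolates all the analytic work in a single summability lemma for $\phi(L,\vtheta)$.
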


The proof of theorem \ref{th:norm} is again contained in appendix \ref{app_proofs}. Since the CSS estimator is consistent, the asymptotic distribution theory is inferred from a Taylor expansion of the score function about $\vtheta_0$. A central limit theorem is shown to hold for the score function at $\vtheta_0$, together with a uniform weak law of large numbers for the Hessian matrix. The latter allows to evaluate the Hessian matrix in the Taylor expansion of the score function at $\vtheta_0$. Thus, the asymptotic distribution of the CSS estimator, as given in theorem \ref{th:norm}, can be inferred from solving the Taylor expansion for $\sqrt{n}(\hat \vtheta - \vtheta_0)$. As usual in the state space literature, no analytical solution to the asymptotic variance of the CSS estimator can be provided. The parameters of the reduced form depend non-trivially on $\vtheta$, so that the partial derivatives of the reduced form cannot be analytically derived. However, from theorem \ref{th:norm} it follows that an estimate for the parameter covariance matrix can be obtained from the negative inverse of the Hessian matrix computed in the numerical optimization.

Estimation of the latent component $x_t$ in \eqref{eq:mod2} is considered next. In line with the methodological literature on state space models, $x_t$ is estimated by plugging the CSS estimates $\hat \vtheta$ into the projection
\begin{align}\label{KS:1}
	x_{t|n}(\vtheta) = \Cov_\vtheta(x_t, y_{n:1})\Var_\vtheta(y_{n:1})^{-1}y_{n:1}=
	\sum_{i=0}^{t-1} \pi_i(-d) \mSigma_{\eta_{t:1}y_{n:1}}^{(i, \cdot)} \mSigma_{y_{n:1}}^{-1} y_{n:1},
\end{align}
where $y_{n:1} = (y_n,...,y_1)'$, $\mSigma_{\eta_{t:1}y_{n:1}} = \Cov_\vtheta(\eta_{t:1}, y_{n:1})$, and  $\mSigma_{y_{n:1}} = \Var_\vtheta(y_{n:1})$. The superscript in $\mSigma_{\eta_{t:1}y_{n:1}}^{(i, \cdot)}$ denotes the $i$-th row of the matrix, and 
\begin{align*}
	\mSigma_{\eta_{t:1} y_{n:1}}^{(i, j)} = \begin{cases}
	\pi_{n-t+i-j}(-d)\sigma_\eta^2 & \text{if } n-j \geq t- i, \\
	0 & \text{else,}
	\end{cases} 
\end{align*}
while the entries of $\mSigma_{y_{n:1}}$ follow from lemma \ref{L:Kalman} by setting $t=n$. For $\vtheta = \vtheta_0$, $x_{t|n}(\vtheta_0)$ is the minimum variance linear unbiased estimator given $y_1,...,y_n$, see \citet[lemma 2]{DurKoo2012}. Due to theorem \ref{th:cons}, this property holds asymptotically for $x_{t|n}(\hat \vtheta)$. Note that \eqref{KS:1} is identical to the Kalman smoother, see \citet[ch.\ 4.4]{DurKoo2012}. However, \eqref{KS:1} is computationally much simpler, as it avoids the computationally intensive Kalman recursions for the conditional variance. In line with lemma \ref{L:Kalman}, \eqref{KS:1} is the fractional filter for $x_t$ given $\mathcal{F}_n$. 
	
Finally, estimation of the seasonal components $\alpha_{i,0}$, $i=1,...,7$, and $\mu_0$ is considered. Theoretically, all parameters of the model in \eqref{eq:mod} could be estimated jointly by the CSS estimator. But as \cite{TscWebWe2013b} explain, including deterministic terms in the optimization can lead to poor results in finite samples for fractionally integrated processes, particularly when $d_0$ is close to unity, as the deterministic terms suffer from poor identification. They provide simulation evidence and a line of reasoning explaining why the following two-step estimator is more robust: In the first step, the integration order $d_0$ is estimated using the exact local Whittle estimator of \cite{Shi2010}, which allows for unknown deterministic terms and yields $\hat d_{EW}$. Based on $\hat d_{EW}$, the deterministic terms $\mu_0, \alpha_{0,1},...,\alpha_{0,7}$ in
\begin{align}\label{det1}
	\Delta_+^{\hat d_{EW}} \log \tilde Y_t = \Delta_+^{\hat d_{EW}} \mu + \sum_{i=1}^7 \alpha_i \Delta_+^{\hat d_{EW}} s_{i, t} + error_t, 
\end{align}
are estimated by ordinary least squares. In the second step, the objective function of the CSS estimator in \eqref{eq:QML} is minimized for the adjusted $\log \tilde Y_t  - \hat \mu -  \sum_{i=1}^7 \hat\alpha_i s_{i, t}$.

As an alternative to \eqref{det1}, one could also eliminate the seasonal components by averaging over seven neighboring observations, as $\sum_{i=1}^7 \alpha_{i, 0} s_{i, t} = 0.$ The intercept in
\begin{align}\label{det2}
	\Delta_+^{\hat d_{EW}}\frac{1}{7}\sum_{i=0}^{6} \log \tilde Y_{t-q+i} = \Delta_+^{\hat d_{EW}} \mu + error_t, \qquad 0 \leq q \leq 6,
\end{align}
could then be estimated by ordinary least squares. $q$ determines whether averages are calculated solely based on past data ($q = 6$), based on centered data around $t$ ($q=3$), or based on future data ($q=0$). While the second approach does not require to estimate $\alpha_{1,0},...,\alpha_{7,0}$, averaging over seven days smooths out potential kinks in the contact rate which is problematic. Furthermore, averaging may pollute the estimates of $x_t$ and induce spurious long memory. Finally, the choice of $q$ is not trivial: While for forecasting purposes $q=6$ is adequate, choosing $q=0$ is likely to account best for the delay in reporting of case numbers, and obviously $q=3$ may be a good compromise between the two options. In the applications $\mu_0, \alpha_{0,1},...,\alpha_{0,7}$ will be estimated via \eqref{det1}. 
\section{Empirical results}\label{application}
In this section, estimation results for the time-varying contact rate $\beta_t$ are presented for Canada, Germany, Italy, and the United States. The underlying data on confirmed, recovered, and deceased cases stems from the JHU CSSE. As  in \cite{LeeLiaSe2021} and \cite{LiuMooSc2021}, $t=1$ is set once the number of cumulative cases reaches $100$. Prior smoothing as suggested by \cite{LeeLiaSe2021} and \cite{LiuMooSc2021}, who use one-sided three-day rolling averages to smooth the data, is avoided, as this likely pollutes the kinks in the contact rate that occur due to containment measures.\footnote{\cite{LiuMooSc2021} argue that one-sided three-day rolling averages smooth out noise generated by the timing of the reporting. However, the opposite should be the case, as a one-sided smoothing shifts case numbers from past to present, while a delay in reporting shifts case numbers from present to the future. To fix the latter, a forward-looking filter is required, not a backward-looking one, see the discussion at the end of section \ref{Ch:est}.}

Instead of smoothing out seasonality, the data is adjusted for weekly seasonal patterns as described at the end of section \ref{Ch:est} using \eqref{det1}. The bandwidth for the exact local Whittle estimator in \eqref{det1} is set to $m=\lfloor n^{0.65} \rfloor$, which is justified by the Monte Carlo study in appendix \ref{MC}. Based on the seasonally adjusted data, the parameters $\vtheta_0$ are estimated via the CSS estimator \eqref{eq:QML}, where $100$ combinations of starting values for $\vtheta_0$ are drawn from uniform distributions with appropriate support (in particular $d \in [0.5, 2]$) to avoid convergence to a local optimum. However, due to the parsimonious parametrization of the model, all combinations of starting values converged to virtually identical optima, implying that the procedure is robust to the choice of starting values. Plugging the CSS estimates into \eqref{KS:1}, together with $\hat \mu$ in \eqref{det1}, yields the log contact rate estimate $\log \hat \beta_t$.

The average infected period is required for $\mathcal{R}_t$ and is estimated by solving \eqref{eqn:2} for $\gamma$ and taking the average
\begin{align}\label{eqn:gamma}
	\hat \gamma = \frac{1}{n-1}\sum_{t=2}^{n} \left[ \hat \beta_t S_{t-1} - \frac{\Delta I_t}{I_{t-1}}\right]. 
\end{align}
This reflects that the definition of recovered varies over the countries under study, particularly as non-hospitalized persons are typically assumed to have recovered $h$ days after they tested positive, and $h$ varies over the countries under study. The choice of $h$ proportionally affects the number of currently infected $I_t$, and thus $\beta_t$ is inversely proportional to $h$ by \eqref{eqn:5}. Consequently, only the dynamics of $\beta_t$ should be compared over the different countries, not the absolute numbers. In contrast, the reproduction rate $\mathcal{R}_t = \beta_t / \gamma$ accounts for the different $h$ when $\gamma$ is estimated via \eqref{eqn:gamma}. If instead $\gamma = 1/18$ is fixed as in \cite{LeeLiaSe2021}, the dependence on $h$ is not resolved and countries with a higher $h$ will exhibit a smaller reproduction rate by construction. This is precisely the reason for the implausible estimates for $\mathcal{R}_t$ in \cite{LeeLiaSe2021}, and is solved by accounting for different $h$ via \eqref{eqn:gamma}.

Results are reported for Canada, Germany, and Italy in subsection \ref{subsec:1}. They are selected as they are all members of the G7 and have implemented containment measures of different strength, duration, and at different points in time, thus making a comparison interesting. For the selected countries, there exist reliable data on confirmed, recovered, and deceased cases provided by the JHU CSSE. As will be shown, the latter is not the case for the US, where data on recovered subjects suffers heavily from under-reporting, yielding a severe downward-bias for the estimated contact rate and the resulting reproduction rate $\mathcal{R}_{t}$ as reported by \cite{LeeLiaSe2021}. The problem is fixed by an assumption on the average duration of an infection, and results for the US are presented in subsection \ref{subsec:2}. As will become apparent there, the fractional filter is quite robust to under-reporting of recovered cases. To monitor the pandemic in real-time, subsection \ref{subsec:3} examines the precision of the fractional filter at the end of the sample.

\subsection{Canada, Germany, and Italy}\label{subsec:1}
For Canada, figure \ref{fig:1} sketches the estimated log contact rate and the resulting reproduction rate $\hat{\mathcal{R}}_{t} = \hat \beta_t/\hat\gamma$ in the first row. The average duration of an infection is estimated to be $1/\hat \gamma = 18.29$ days.
The second row of figure \ref{fig:1} displays the estimated prediction error $\vv_t(\hat \vtheta)$ and its estimated autocorrelation function. 

\begin{figure}[p]
	\includegraphics[width=\linewidth]{./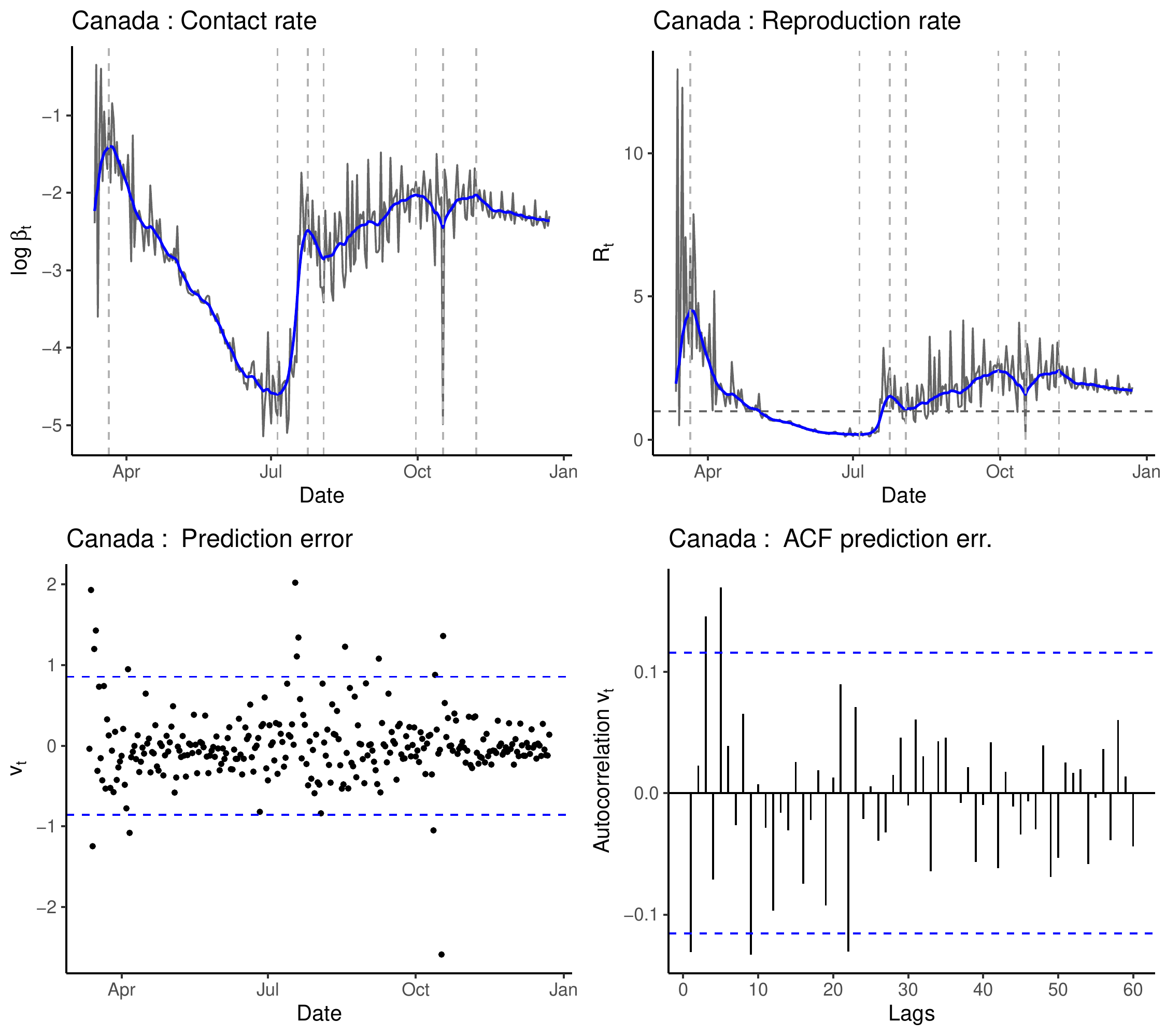}
	\caption[]{Estimation results for Canada. The top-left panel displays the estimated contact rate $\log \hat \beta_t$ in blue together with the observable $\log \tilde Y_t$ in gray. The top-right panel shows the estimated reproduction rate $\hat{\mathcal{R}}_{t} = \hat \beta_t/\hat\gamma$ in blue together with $\tilde Y_t/\hat \gamma$ in gray. The dashed horizontal line corresponds to $\mathcal{R} = 1$. The dashed vertical lines correspond to turning points of the contact rate. The bottom-left panel shows the estimated prediction error $\vv_t(\hat \vtheta)$ in \eqref{eq:prederr} together with two standard deviations in blue, dashed. The bottom-right panel sketches the estimated autocorrelation function of the prediction error $\vv_t(\hat \vtheta)$ together with a $95\%$ confidence interval. }
	\label{fig:1}
\end{figure}%
Based on the top-left panel of figure \ref{fig:1}, several turning points of the contact rate can be identified using a simple algorithm that defines a minimum (maximum) whenever the contact rate $\beta_t$ at $t$ is smaller (greater) than all $\beta_{t+1},...,\beta_{t+10}$, the contact rates of the next ten days. 
These periods correspond to several policy regimes characterized by the strengthening and easing of containment measures. While a small selection of policy measures is presented below, a detailed overview is given by \citet[][]{McCSmiAn2020}.
\begin{enumerate}[noitemsep]
	\item \textbf{March 13 -- March 21:} The contact rate increases and peaks on March 21. As a reaction, several provinces and territories declare the state of emergency between March 13 and March 22, impose gathering bans, close schools, universities, and businesses, and cancel mass events, among others.
	\item \textbf{March 22 -- July 5:} After the implementation of containment measures the contact rate decreases continuously. While additional containment measures such as travel restrictions are implemented in April, several provinces and territories start to step-wise relax their restrictions in May. %For instance, Ontario, Canada's largest province, issued a three-stage reopening plan: Around May 20, several restrictions on outdoor activities and workplaces were lifted. In June, the province entered stage 2, allowing for larger gatherings and the re-opening of certain businesses. 
	On July 1, Canada's national summer holidays begin. 
	\item \textbf{July 6 -- July 24:} During the first half of the summer holidays the contact rate increases sharply. The reproduction rate increases above unity on July 20.
	\item \textbf{July 25 -- August 3:} The contact rate slightly decreases, while the reproduction rate remains above unity.
	\item \textbf{August 4 -- September 30:} After a short phase of reduction, contact and reproduction rate start to increase again, while schools re-open on September 8. Canada's prime minister Trudeau says the second wave of COVID-19 is already underway.
	\item \textbf{October 1 -- October 17:} Contact and reproduction rate slightly decrease. Thanksgiving takes place on October 12.
	\item \textbf{October 18 -- November 7:} After Thanksgiving, contact and reproduction rate increase slightly. On November 3rd, Ontario introduces an incidence-based system for when to tighten containment measures.
	\item \textbf{November 8 -- December 23:} Contact and reproduction rate exhibit a slight but steady decrease. Reproduction rate remains above unity.
\end{enumerate}
The estimated contact rate and the resulting reproduction rate are well in line with the chronology of policy interventions. In particular, the fractional filter allows to identify turning points of the contact rate that are not visible from the raw data that is plotted in gray color in figure \ref{fig:1}. 

The two graphs at the bottom of figure \ref{fig:1} illustrate how well the Canadian data fits the model assumptions. Assumptions \ref{asu:1} and \ref{asu:2} assume the measurement error $u_t$ and the log contact rate shock $\eta_t$ to be homoscedastic white noise processes. Since $\hat \vtheta$ is consistent, see theorem \ref{th:cons}, by \eqref{eq:a1} the prediction error 
$\vv_t(\hat \vtheta)$ becomes a white noise process as $n \to \infty$ if assumptions \ref{asu:1} and \ref{asu:2} hold. While some outliers exist, about $95\%$ of the prediction errors lie within two standard deviations, as the bottom-left panel illustrates. The bottom-right panel shows that there is not much autocorrelation left in the prediction error. Given the parsimonious parametrization of the fractional UC model, this is surprising. The two panels at the bottom of figure \ref{fig:1} thus substantiate that the dynamics of the log contact rate are well captured by a fractionally integrated process.

The estimated integration order is $\hat d = 1.2166$, which implies that a unit shock on the contact rate growth $\Delta \log \beta_t$ retains $21.66\%$ of its impact in $t+1$, $13.17\%$ in $t+2$, and $9.73\%$ in $t+3$. After one week, the impact is still $5.10\%$, after two weeks $2.98\%$, and after three weeks $2.17\%$, which is due to the strong persistence of fractionally integrated processes, see assumption \ref{asu:2} for the formula for $\pi_i(d-1)$. The slow decay may very well describe the persistent impact of past information shocks on today's social behavior.

For Germany, figure \ref{fig:2} plots the empirical results.
The average infected period is estimated to be $1/\hat \gamma = 21.27$ days, which is slightly greater compared to Canada and likely results from different algorithms to estimate the number of recovered individuals. The integration order estimate is of similar size as for Canada ($\hat d = 1.2693$, see table \ref{ta:1}). 
\begin{figure}[p]
	\includegraphics[width=\linewidth]{./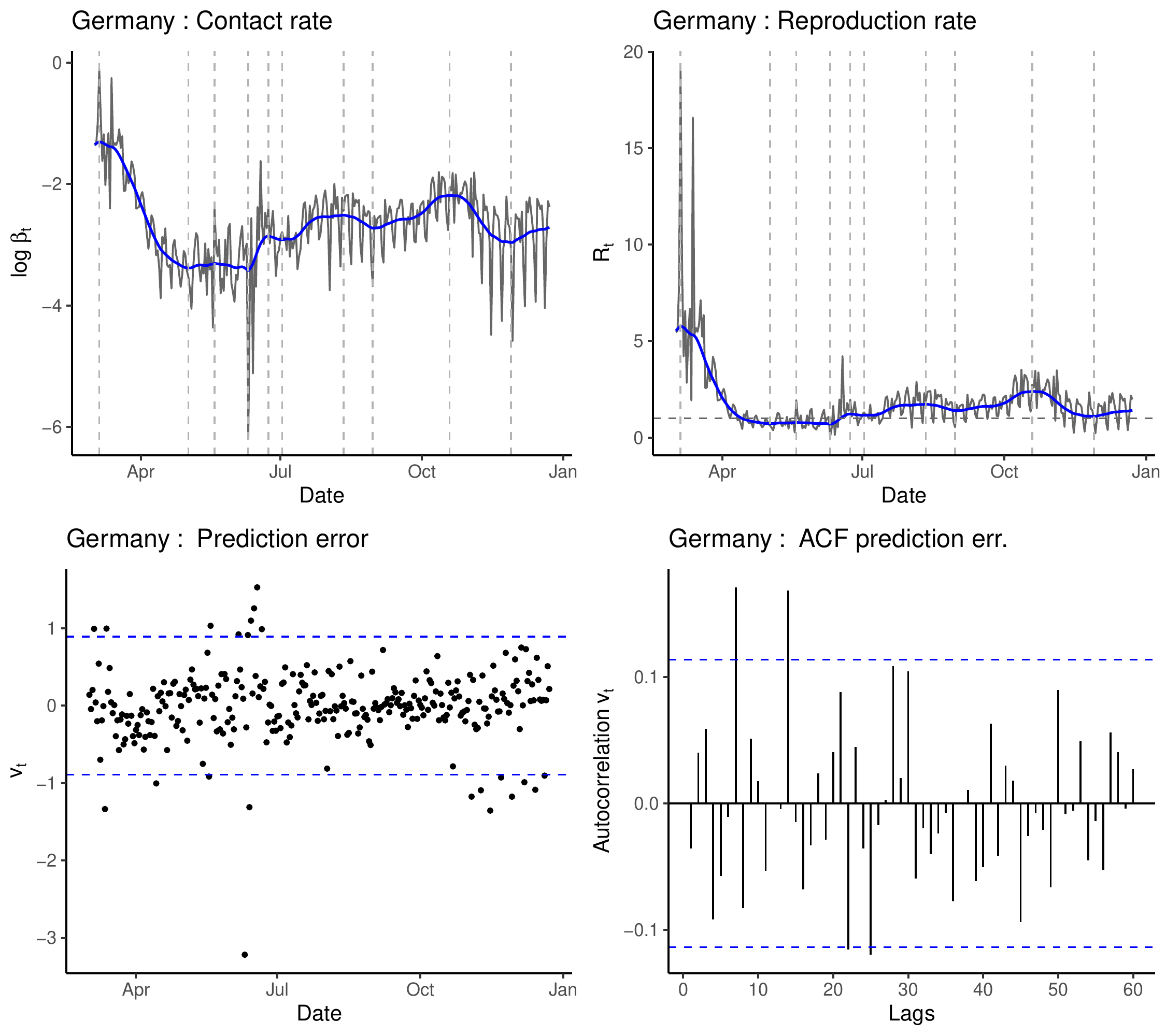}
	\caption[]{Estimation results for Germany. For a description see figure \ref{fig:1}.}
	\label{fig:2}
\end{figure}%
From the top-left panel of figure \ref{fig:2} the following contact regimes can be identified:
\begin{enumerate}[noitemsep]
	\item \textbf{March 2 -- March 5:} The contact rate starts at a comparably high level on March 2, likely caused by Carnival celebrations and ski tourism during Germany's winter holidays \citep{FelHinCh2020}. It peaks on March 5. 
	\item \textbf{March 6 -- May 2:} A slight decrease at the beginning of March turns into a sharp decrease around March 16. Measures to contain the spread of the virus, such as school closings and event cancellations, are implemented from March 13 on. From March 22 on, gatherings of more than two people are prohibited and several businesses are closed \citep{HarWaeWe2020}. At the end of April, schools and businesses partly re-opened.
	\item \textbf{May 3 -- May 19:} Contact and reproduction rate slightly increase. Gathering restrictions are relaxed and most businesses are allowed to re-open on May 6.
	\item \textbf{May 20 -- June 10:} Contact and reproduction rate slightly decrease.
	\item \textbf{June 11 -- June 23:} A short but strong increase is caused, among others, by a massive outbreak in a meat factory \citep{BBC2020}. 
	\item \textbf{June 24 -- July 2:} A slight decline follows. On June 29, summer holidays begin in Germany's largest state.
	\item \textbf{July 3 -- August 11:} During the summer holidays, Germany experiences a further increase in its contact and reproduction rate.
	\item \textbf{August 12 -- August 30:} Contact and reproduction rate decrease.
	\item \textbf{August 31 -- October 19:} A slight increase of the contact rate is followed by a strong increase at the end of September. On October 15 stricter rules for hotspots are implemented, including mask obligations, contact restrictions, and curfews \citep{DW2020}.  
	\item \textbf{October 20 -- November 28:} Contact and reproduction rate decrease, but the latter remains above unity. On October 28 a `lockdown light' is announced for Germany, inducing gathering restrictions and business closings among others \citep{BBC2020b}.
	\item \textbf{November 29 -- December 23:} While contact and reproduction rate slightly increase, the government announces a tightening of its lockdown measures on December 13 \citep{BBC2020c}.
\end{enumerate}
Similar to Canada, the two panels at the bottom of figure \ref{fig:2} indicate that the model assumptions are largely satisfied, despite little remaining autocorrelation in the prediction error.

For Italy, a slight adjustment of the data is required, as $\Delta C_t = 0$ on June 19 and thus $\log \tilde Y_t$ is not defined, see \eqref{eqn:5}. To adjust the single observation, averages from the neighboring observations are used (i.e.\ $\Delta C_t = 1/3 (\Delta C_{t-1} + \Delta C_{t+1})$), while the adjusted cases in $t-1$ and $t+1$ will equal $2/3$ of the reported cases. Thus, the cumulated number of reported cases is unaffected by the adjustment. 

The empirical results are displayed in figure \ref{fig:3}. An estimate for the average infected period is $1/\hat \gamma = 35.92$ days, which is significantly higher than for Canada and Germany. The discussion below \eqref{eqn:gamma} gives an explanation for the high variation in $\gamma$ over the different countries. 
The estimated integration order is $\hat d =1.4304$ (see table \ref{ta:1}), which is somewhat greater than the estimates for Germany and Canada, implying that a shock $\eta_t$ in Italy will yield a more persistent effect on the contact rate. This may be explained by the severity of the pandemic in Italy in spring 2020, which likely had a long-lasting impact on social behavior.  
Based on the top-left panel of figure \ref{fig:3} the following regimes and turning points are visible:

\begin{figure}[p]
	\includegraphics[width=\linewidth]{./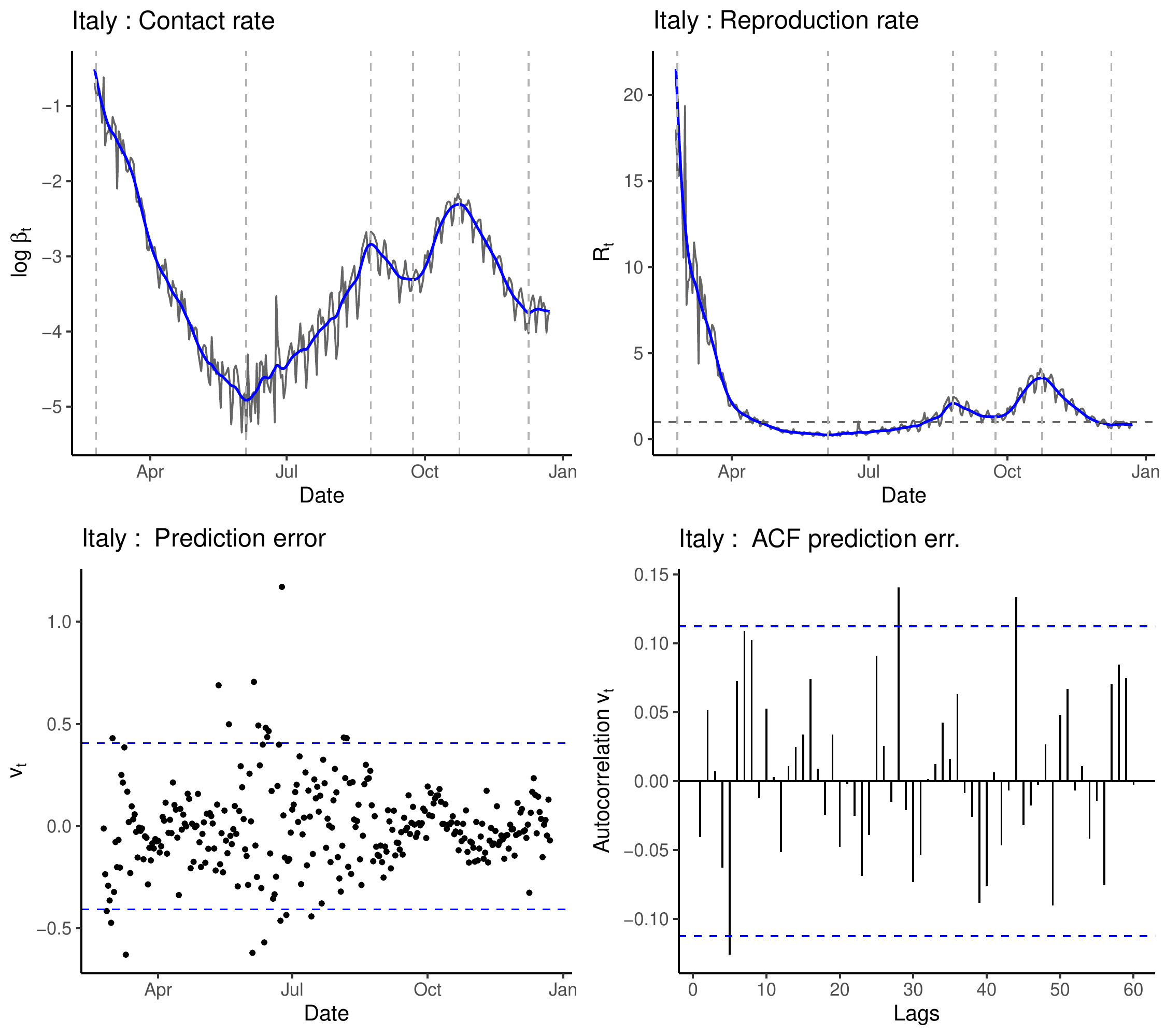}
	\caption[]{Estimation results for Italy. For a description see figure \ref{fig:1}.}
	\label{fig:3}
\end{figure}%
\begin{enumerate}[noitemsep]
	\item \textbf{February 24 -- February 25:} Due to several clusters in northern Italy more than 100 cumulative cases are counted on February 23. While the contact rate peaks on February 25,
	epicenters in northern Italy effectively went into lockdown on February 22 \citep{SogScoOd2020}. 
	\item \textbf{February 26 -- June 4:} The contact rate exhibits the strongest decline among all countries under study. During March, the government announces several containment measures such as school closings, halting all non-essential businesses, and tight regulations on free movement \citep{SogScoOd2020}. During May, the lockdown is lifted gradually. It effectively ends on June 3 \citep{BBC2020d}.
	\item \textbf{June 5 -- August 26:} The contact rate exhibits a long and steep increase.  
	\item \textbf{August 27 -- September 23:} A short decrease of the contact rate follows.
	\item \textbf{September 24 -- October 24:} The contact rate again increases and reaches a level as high as at the end of March. Gatherings, restaurants, sports and school activities are again restricted \citep{DW2020b}.
	\item \textbf{October 25 -- December 9:} The contact rate strongly declines, while additional restrictions on bars and restaurants are implemented \citep{DW2020c}.
	\item \textbf{December 10 -- December 23:} A minor increase in the contact rate is visible the days before Christmas.
\end{enumerate}
The two panels at the bottom of figure \ref{fig:3} are similar to Canada and Germany, and indicate that the prediction errors are rather homoscedastic, although outliers exist, and little autocorrelation is left.

\subsection{United States}\label{subsec:2}
The US is treated separately, since data on recovered cases reported by the JHU CSSE seem heavily downward-biased. To see this, consider the difference between lagged cumulative confirmed, cumulative recovered, and cumulative deceased cases for different lags $h$ 
\begin{align}\label{diff}
	C_{t-h} - R_t - D_t. 
\end{align}
For $h=0$, \eqref{diff} measures the number of currently infected subjects. For small $h$, \eqref{diff} should be positive, as it takes some time for the infected subjects to either recover or die. As $h$ increases, \eqref{diff} should turn negative, as an increasing number of subjects contained in the cumulative cases $C_{t-h}$ and subjects infected between $t-h$ and $t$ (and thus contained in $C_t - C_{t-h}$) either recover or die. The turning point, denoted by $\bar h$, should be close to the average infected period $1/\gamma$, as long as new confirmed cases between $t-h$ and $t$, i.e.\  $C_t - C_{t-h}$, do not explode. If they do, then $\bar h$ should be smaller than $1/\gamma$, as outflows from $C_t - C_{t-h}$ disproportionally increase $R_t$ and $D_t$.
\begin{figure}[h]
	\includegraphics[width=\linewidth]{./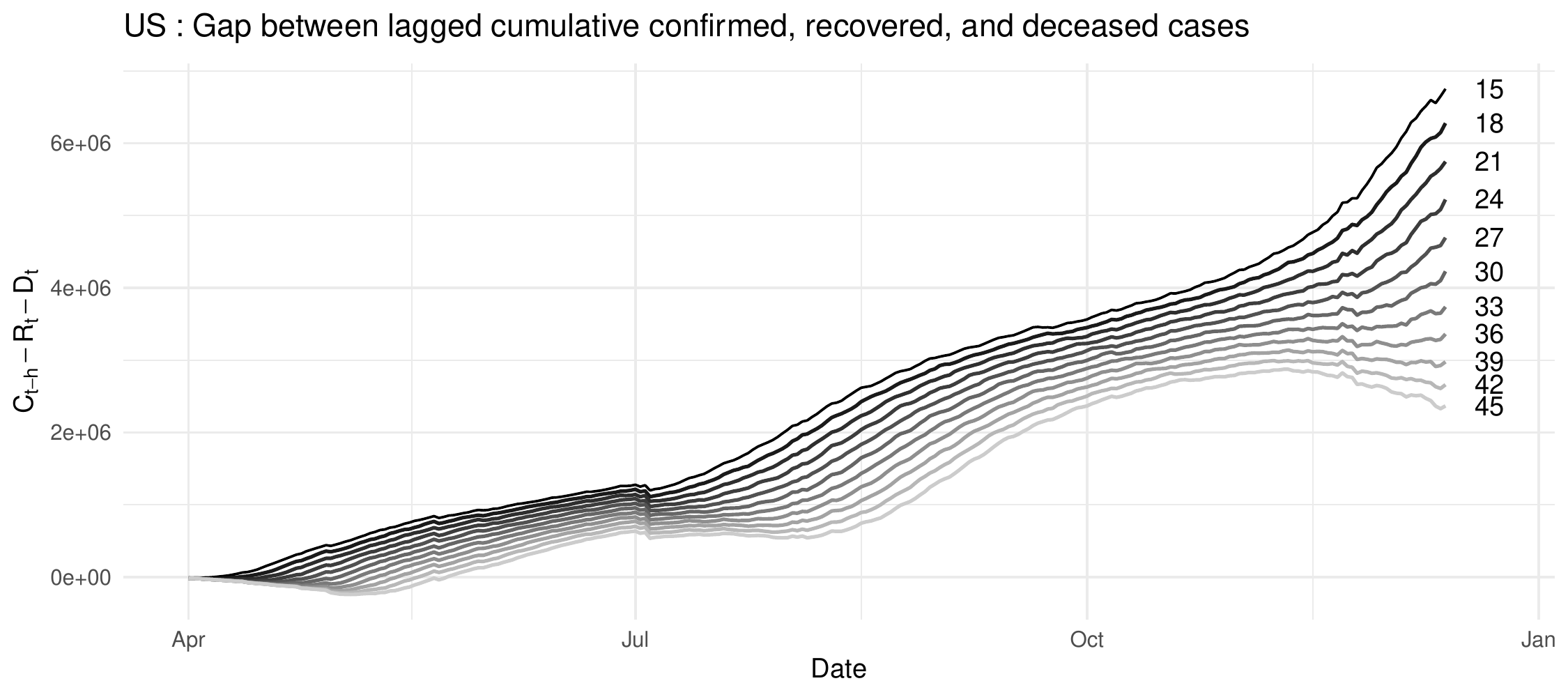}
	\caption[]{Difference between lagged cumulative confirmed, recovered, and deceased cases $C_{t-h} - R_t - D_t$ (in case numbers) for $h=15, 18,...,42,45$.}
	\label{fig:5}
\end{figure}%

Figure \ref{fig:5} plots \eqref{diff} in case numbers for lags $h=15,18,...,42,45$. As can be seen, even after $45$ days the difference between lagged cumulative confirmed, cumulative recovered, and cumulative deceased cases is predominantly positive. This is at odds with the average infected periods for Canada, Germany, and Italy as found in subsection \ref{subsec:1}, and indicates that data on $R_t$, $D_t$ may suffer from under-reporting. As stated by the JHU CSSE, data on recovered cases are based on local media reports as well as on state reporting when available. 
US state-level recovered cases stem from the COVID Tracking Project (www.covidtracking.com). \citet{CTR2020} recently pointed out that several states and territories, including California and Florida, do not report data on recovered cases, which may explain the downward-bias. In addition, definitions of recovered cases differ considerably across states, and the majority of the states consider a case as recovered if a certain number of days (generally between 10 and 30) after a positive test result or symptom onset have passed and the patient has not died. 

Since no reliable data on recovered cases is available for the US, an approximation is required. In the following, it will be assumed that individuals either recover or die $\bar{h} = 21$ days after they tested positive, $C_{t-21} = D_t + R_t$. The assumption is justified as follows: First, it is similar to the average infected period estimated for Germany and more conservative than the estimate for Canada. And second, it is centered in the range of definitions for recovered individuals by the federal states. In addition, estimates for $\bar{h} = 18$ and $\bar h = 24$ days are presented, which gives a reasonable interval for the contact rate. %Note that the reproduction rate estimate $\hat{ \mathcal{R}}_{t} = \hat{\beta}/\hat{\gamma}$ will be largely unaffected by the choice of $\bar h$, as it is scaled by $\hat{\gamma}$, and the force of infection $\hat \lambda_t = \hat \beta_t I_t$ will be largely unaffected, as $\beta_t$ is inverse proportional to $\bar h$, while $I_t$ is proportional to $\bar h$.

\begin{figure}[p]
	\includegraphics[width=\linewidth]{./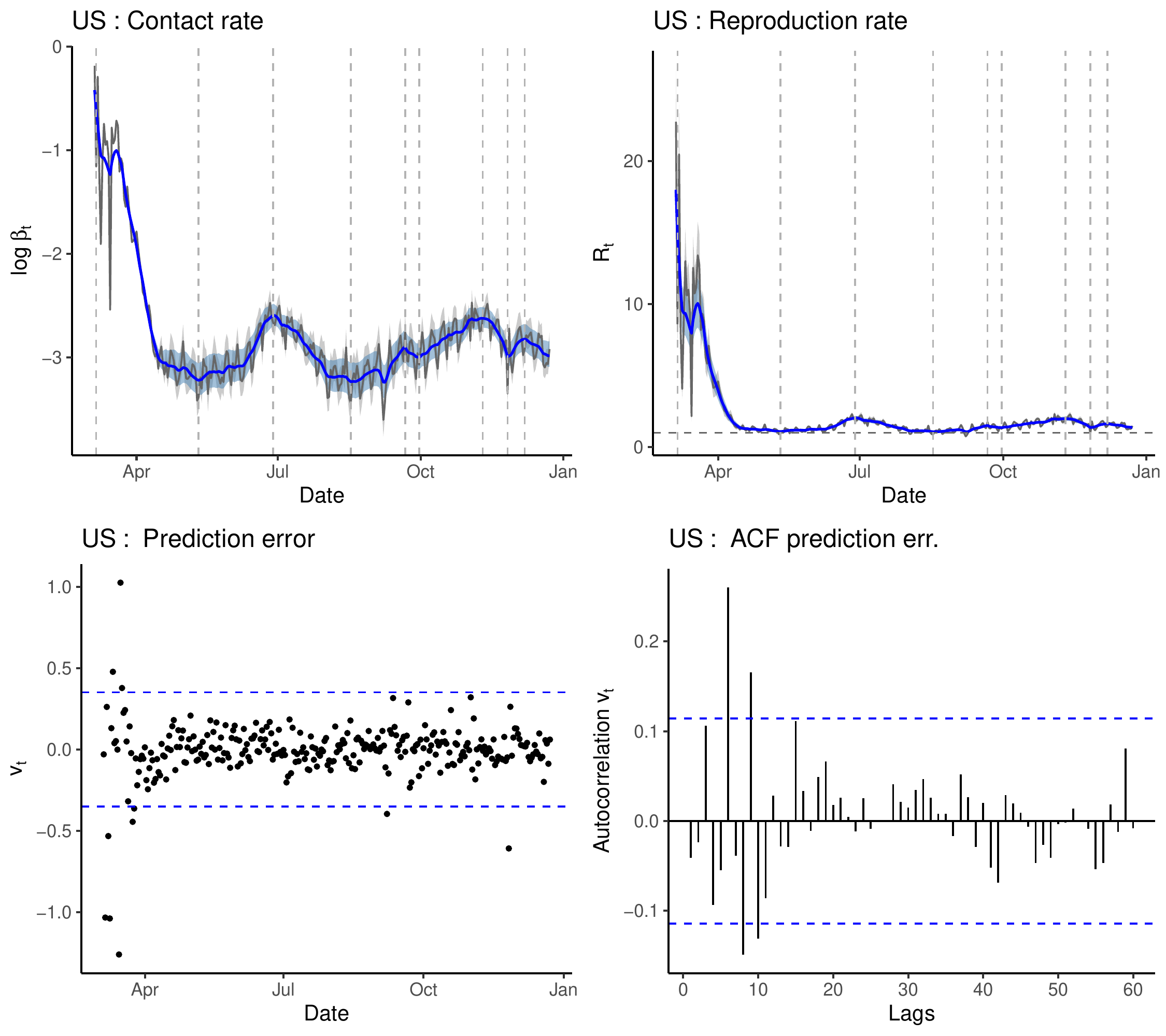}
	\caption[]{Estimation results for the United States for $\bar h=21$. Shaded areas correspond to $\bar h=18$ and $\bar h=24$. For a description see figure \ref{fig:1}.}
	\label{fig:4}
\end{figure}%

Under the assumption of $\bar h = 21$, the estimated integration order equals $\hat d = 1.2499$ (see table \ref{ta:1}) and is very similar to the results in subsection \ref{subsec:1}. Estimates for contact and reproduction rate are visualized in figure \ref{fig:4} and again allow to decompose the chronology of the pandemic into different regimes: 
\begin{enumerate}[noitemsep]
	\item \textbf{March 5 -- March 6:} Contact and reproduction rate peak at the initial stage of the epidemic. 
	\item \textbf{March 7 -- May 11:} Contact and reproduction rate steadily decrease, despite a small blip on March 19. The national state of emergency is declared on March 13, and several containment measures such as business closings and stay-at-home orders are implemented mainly during the second half of March. Depending on the state, businesses re-open from April 20 on. More than half of the states have opened businesses on May 7  \citep{CheKasSc2021}.
	\item \textbf{May 12 -- June 28:} As containment measures are relaxed, contact and reproduction rate increase slightly during the second half of May and experience a strong increase during June. 
	\item \textbf{June 29 -- August 17:} Contact and reproduction rate decline, reaching the level of May.
	\item \textbf{August 18 -- September 21:} A slight increase is visible.
	\item \textbf{September 22 -- September 30:} A short decrease follows.
	\item \textbf{October 1 -- November 10:} Contact and reproduction rate exhibit a steady increase, reaching the June peak. Regional containment measures are implemented at the beginning of November.
	\item \textbf{November 11 -- November 26:} Contact and reproduction rate decrease.
	\item \textbf{November 27 -- December 7:} After Thanksgiving, contact and reproduction rate experience a short increase.
	\item \textbf{December 8 -- December 23:} A decrease is visible from December 8 on. 
\end{enumerate}
As can be seen from figure \ref{fig:4}, estimates for the contact rate are rather robust to the choice of $\bar h$. They are slightly greater for $\bar h = 18$, as the number of currently infected $I_t$ is smaller by construction and thus additional contacts are required to explain new confirmed cases, while they are slightly smaller for $\bar h = 24$ exactly for the opposite reason. The estimated reproduction rate is virtually identical among the three scenarios, as it is normalized by the average infected period $\hat{\mathcal{R}}_{ t} = \hat \beta_t / \hat \gamma$. 
For $\bar h=21$, the two panels at the bottom of figure \ref{fig:4} indicate that the model assumptions are largely satisfied, despite some weak correlation in the prediction errors. The plots are very similar for $\bar h=18$ and $\bar h=24$ and thus not shown. The reproduction rate $\hat{\mathcal{R}}_{t}$ is greater than unity during the whole sample, which contradicts the results of \cite{LeeLiaSe2021} who rely on the downward-biased data on recovered subjects from the JHU CSSE while fixing the average infected period to be $1/\gamma = 18$.

\subsection{Monitoring the current state of the pandemic}\label{subsec:3}
This subsection investigates the end-of-sample properties of the fractional filter for real-time estimation of the contact rate. Reliable contact rate estimates at the current frontier of the data would allow to real-time monitor the state of the pandemic and can serve as a surveillance measure for future outbreaks. Based on reliable real-time estimates for the contact rate, policy rules can be implemented to prevent an exponential growth of case numbers. Acting early reduces economic and social costs of containment measures, and consequently a well-designed policy rule will be beneficial, given that the fractional filter yields a reliable estimate for the current level of the contact rate. Drawing inference on the latter is the focus of this subsection.

In detail, real-time monitoring is simulated by truncating the sample at a certain point $t$, $r \leq t \leq n$, where $r$ is the minimum sample size for the CSS estimator to produce reasonable estimates. The parameters $\vtheta_0, \mu_0, \alpha_{1,0},...,\alpha_{7, 0}$ of \eqref{eq:mod} are then estimated as described in section \ref{Ch:est} using the information available at time $t$, $\mathcal{F}_t$, and the resulting parameter estimates are denoted as $\hat \vtheta^{(t)}$, $\hat \mu^{(t)}$, etc. To take into account reporting lags, and to be robust against outliers at the end of the sample, a little backward-smoothing is allowed by reporting the smoothed estimate for the log contact rate at period $t-3$ given the information available at period $t$. From \eqref{KS:1}, the smoothed estimates are
\begin{align}\label{eq:beta_mon}
	\log \hat \beta_{t-3|t} = \hat \mu^{(t)} + x_{t-3|t}(\hat \vtheta^{(t)}).
\end{align}
As \eqref{eq:beta_mon} only depends on information available at $t$, it mimics the situation of a policy maker at $t$ and can be used to draw inference on the monitoring properties of the fractional filter at time $t$. Based on $\hat \beta_{t-3|t}$, policy rules to prevent an exponential spread of the virus can be designed. Such rules could, for instance, define a threshold for $\hat{\mathcal{R}}_{t-3}$ at which additional containment measures are implemented. As the threshold should naturally depend on the number of currently infected, current hospital capacities, and other parameters, the precise design of such a policy rule is left to the experts, and only a primitive policy rule will be introduced later for illustrative purposes. 

The reliability of real-time estimates for the contact rate is assessed by the following experiment: First, an estimation sample that consists of information available until May 31 is defined, for which $\vtheta_0, \mu_0, \alpha_{1,0},...,\alpha_{7, 0}$ are estimated. It consists of at least 80 observations, which is considered as a reasonable sample size for the estimation sample. 
Based on these estimates, $\log \hat \beta_{r-3|r}$ is obtained as described above. In a second step, information available on June 1 is added to the sample and parameter estimates are updated using the $\hat \vtheta^{(r)}$ from the estimation sample as starting values for the CSS estimator, which gives $\hat \vtheta^{(r+1)}$. As before, the estimate for $\log \hat \beta_{r-2|r+1}$ is stored. The procedure repeats for all $t$, $r < t \leq n$, where in every step $t$ the CSS estimator is initialized by $\hat \vtheta^{(t-1)}$. The resulting real-time estimates for the contact rate are then compared to those of subsections \ref{subsec:1} and \ref{subsec:2} to draw inference on their reliability. In addition, a primitive policy rule is introduced. It assumes governments to take action as soon as $\hat{\mathcal{R}}_{t-3} = \hat{\beta}_{t-3|t} / \hat \gamma > 1.2$. The latter is motivated by the observation that preventing an exponential propagation (i.e.\ $\mathcal{R}_{ t-3} > 1$) is desirable, and a margin of $0.2$ is included to be robust against outliers. Finally, the real-time contact rate estimates are compared to a rolling seven-day average 
\begin{align}\label{benchmark}
	\log \hat \beta^{benchmark}_{t-3|t} = \frac{1}{n}\sum_{i=0}^{6} \log \tilde Y_{t-i},
\end{align}
which includes three forward-looking observations and should smooth out the seasonality. 

The real-time experiment considered in this paper deviates from \cite{LeeLiaSe2021}, who suggest to monitor the current state of the pandemic by fitting a linear time trend with structural breaks to $\log \tilde Y_t$. \cite{LeeLiaSe2021} evaluate the monitoring properties of their contact rate estimate ex-post, using all information available in their sample. Consequently, their estimates at point $t$ depend on information that was not available to policy makers at period $t$ whenever $t < n$. As structural breaks are not well identified at the end of the sample, recent changes in the contact rate cannot be expected to be found by the estimator of \cite{LeeLiaSe2021}. 

%Compared to the surveillance measure as proposed by \cite{LeeLiaSe2021}, both the design of the real-time experiment and the fractional filter itself exhibit advantages: First, \cite{LeeLiaSe2021} evaluate the suitability of their so-called sparse HP filter for monitoring future outbreaks ex-post, using all information available in their sample. Consequently, estimates at point $t$ depend on information that was not available to policy makers at period $t$ whenever $t < n$. In contrast, the real-time experiment considered in this paper mimics the reality of a policy maker by only including observations that are available to the policy maker at period $t$. Second, the surveillance measure of \cite{LeeLiaSe2021} is based on changes in the slope of the estimated contact rate, which is modeled as a linear time trend with structural breaks. As structural breaks are not well identified at the end of the sample, recent changes in the contact rate cannot be expected to be found via the sparse HP filter, making the approach rather inappropriate for monitoring. Conversely, the fractional filter allows for gradual changes in the contact rate. As will become clear from the results below, it captures the dynamics of the contact rate at the current frontier of the (truncated) sample well. 

\begin{figure}[p]
	\includegraphics[width=\linewidth]{./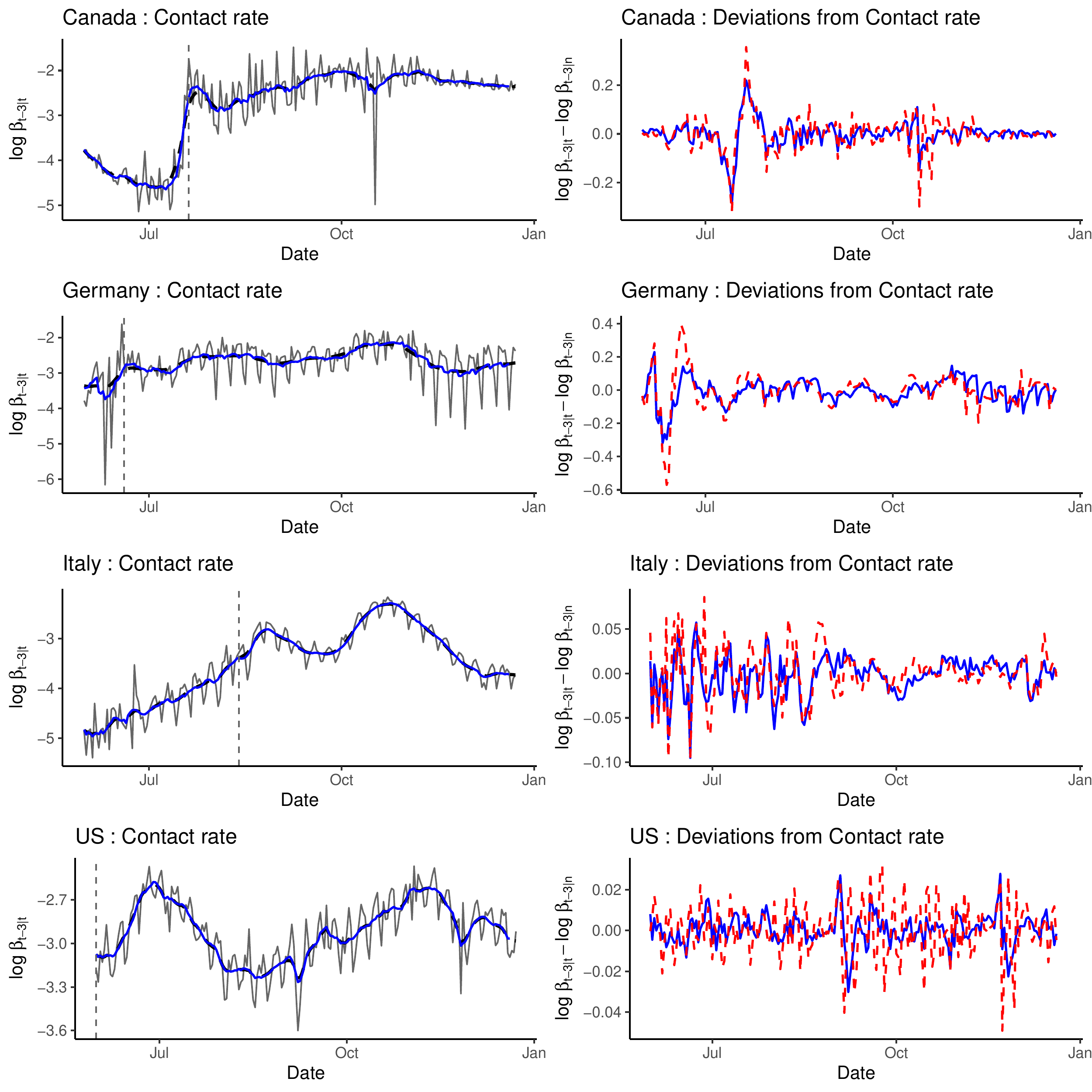}
	\caption[]{Real-time estimates for the log contact rate. The left panels display real-time contact rate estimates $\log \hat \beta_{t-3|t}$ (blue, solid), full sample estimates $\log \hat \beta_{t-3|n}$ (black, dashed), and the observable data $\log \tilde Y_{t-3}$ (gray) for Canada, Germany, Italy, and the United States. The dashed vertical line corresponds to the date where the real-time estimate for the reproduction rate exceeds $1.2$. The right panels show deviations from the full sample contact rate estimates for the real-time estimates $\log \hat \beta_{t-3|t} - \log \hat \beta_{t-3|n}$ (blue, solid) and the benchmark $\log \hat \beta_{t-3|t}^{benchmark} - \log \hat \beta_{t-3|n}$ (red, dashed).}
	\label{fig:6}
\end{figure}%

The results of the real-time experiment are visualized in figure \ref{fig:6}. The four panels on the left side sketch the resulting real-time estimates for the contact rate for the four countries under study, together with the (full sample) results of subsections \ref{subsec:1} and \ref{subsec:2}. As can be seen, the real-time estimates almost perfectly overlap with estimates using the full sample information. This implies that the real-time estimates are well suited for monitoring the current state of the contact rate. Minor deviations are visible for Canada at the end of July, and for Germany at the middle of June, while no greater deviations are visible for Italy and the US. The real-time estimates exceed the threshold $\hat{\mathcal{R}}_{t-3} = \hat{\beta}_{t-3|t} / \hat \gamma > 1.2$ on the same day as estimates based on the full sample for Italy (August 13) and for the US (May 31), where May 31 is the first observation for the real-time estimates. For Canada, the reproduction rate based on the full sample exceeds the threshold on July 21, one day after the real-time estimate. For Germany, the real-time estimates exceed the threshold for the reproduction rate on June 19,
two days before those based on the full sample. These findings again substantiate the usefulness of the fractional filter as a surveillance measure to monitor the current state of the pandemic. From the results in figure \ref{fig:6}, it follows that the primitive policy rule would have required governments to take action during the summer where case numbers were comparably low across the four countries under study. Such an early intervention would have likely reduced the economic and social costs compared to the lockdown measures as implemented at the end of year 2020.

The four panels on the right side of figure \ref{fig:6} display the deviations of $\log \hat \beta_{t-3|t}$ and $\log \hat \beta_{t-3|t}^{benchmark}$ from the log contact rate estimates based on the full sample information $\log \hat \beta_{t-3|n}$. Thus, they shed light on whether the fractional filter improves estimates for the contact rate compared to a rolling seven-day average that uses three forward-looking observations. For Italy and the US, the advantages of the fractional filter directly become apparent, as the benchmark exhibits greater deviations. For Canada and Germany, the fractional filter performs comparably well when large outliers occur, e.g.\ around July 20 for Canada and around June 20 for Germany. 
\section{Conclusion}\label{Conclusion}
To extract a time-varying signal for the COVID-19 contact rate from daily data on confirmed, recovered, and deceased cases, this paper introduces a novel unobserved components model. It models the log contact rate as a fractionally integrated process of unknown integration order. %, reflecting strong persistence of social behavior as well as gradual behavioral adjustments to new information. The integration order of the log contact rate is treated as an unknown parameter to be estimated, reflecting that the persistence properties of the contact rate are unknown.
A computationally simple modification of the Kalman filter is introduced and is termed the fractional filter. It provides a closed-form expression for the prediction error that allows to estimate the model parameters by a 
conditional-sum-of-squares (CSS) estimator. 
%Compared to \mbox{(quasi-)maximum} likelihood estimation, the CSS estimator bears the advantage of being much simpler from a computational perspective. 
The asymptotic theory for the CSS estimator is provided.
For the countries under study, estimation results are well in line with the chronology of the pandemic. They allow to draw inference on the impact of policy measures such as contact restrictions.
The new filtering method bears great potential as a monitoring device for the current state of the pandemic, as it yields reliable contact rate estimates at the current frontier of the data. 

As vaccines become more and more available, future research can generalize the model to include the number of vaccinated. For instance, this can be done by decomposing $1 = S_t + I_t + R_t + D_t + V_t$, where $V_t$ is the fraction of vaccinated. The states $R_t$ and $V_t$ should be non-overlapping as long as vaccines are not rolled out to recovered subjects. While vaccine recommendations vary over the different countries, some assign a lower priority to recovered subjects, so that $R_t$ and $V_t$ are non-overlapping at the early stage of the vaccine roll-out. Furthermore, mutations of the Coronavirus can be taken into account e.g.\ by allowing for a smooth transition between a contact rate with a low probability of virus transmission and one with a high probability. 

For applications beyond COVID-19 related data, the fractional filter offers a robust, flexible, and data-driven way for signal extraction of data of unknown persistence. 
It requires no prior assumptions on the integration order of a process, and thus provides a solution to model specification in the unobserved components literature. Due to its computational advantages compared to the classic Kalman filter, it allows to estimate unobserved components models with richer dynamics.

\section*{Acknowledgments}
The author thanks Nicolas Apfel, Uwe Hassler, Timon Hellwagner, Roland Jucknewitz, Alina Prechtl, Veronika P\"uschel, Lars Schlereth, Rolf Tschernig, Enzo Weber, and the participants of the Department Seminar at the University of Regensburg for very helpful comments.
\clearpage
\appendix
\setcounter{table}{0}
\numberwithin{table}{section}
\setcounter{equation}{0}
\numberwithin{equation}{section}
\section{Estimation results}
\begin{table}[ht]
	\centering
	\begin{tabular}{lrrrr}
		\hline
		& Canada & Germany & Italy & United States \\ 
		\hline
		$d$ & $1.2166$ & $1.2693$ & $1.4304$ & $1.2499$ \\ 
		 & $(0.3271)$ & $(0.1989)$ & $(0.1242)$ & $(0.2638)$ \\ 
		$\sigma_\eta^2$ & $0.0133$ & $0.0107$ & $0.0149$ & $0.0117$ \\ 
		 & $(0.1105)$ & $(0.1188)$ & $(0.2472)$ & $(0.2190)$ \\ 
		$\sigma_u^2$ & $0.2018$ & $0.7991$ & $0.3067$ & $0.0764$ \\ 
		& $(0.6128)$ & $(0.7636)$ & $(0.5920)$ & $(0.3791)$ \\ 
		\hline
	\end{tabular}
\caption{Estimation results $\hat \vtheta$ from the CSS estimator as described in section \ref{Ch:est} for Canada, Germany, Italy, and the United States. Standard errors are denoted in parentheses and were calculated based on the inverse of the numeric Hessian matrix, see theorem \ref{th:norm}.}
\label{ta:1}
\end{table}
\section{Monte Carlo evidence}\label{MC}
The finite sample performance of the CSS estimator is assessed in a Monte Carlo study, where, to be in line with \eqref{eq:mod2}, the data-generating mechanism is given by 
\begin{align}\label{eq:sim}
	y_t = x_t + u_t, \qquad \Delta_+^{d_0} x_t = \eta_t, \qquad t=1,...,n.
\end{align}
$u_t \sim NID(0, \sigma_{u,0}^2)$, $\eta_t \sim NID(0,  \sigma_{\eta, 0}^2)$, $u_t$, $\eta_t$ are uncorrelated, and $\sigma_{\eta, 0}^2 = \rho \sigma_{u, 0}^2$ so that $\rho$ controls the signal-to-noise ratio. The integration orders $d_0 \in \{0.75, 1.25, 1.75\}$ cover the relevant interval for the applications in section \ref{application}, while $\rho \in \{0.5, 1, 2\}$ captures high and low signal-to-noise ratios. The variance parameter is set to $\sigma_{u, 0}^2 = 1$. Different sample sizes $n \in \{100, 200, 300\}$ covering the relevant regions for the applications in section \ref{application} are considered.
The parameters $\theta_0 = (d_0, \sigma_{\eta, 0}^2, \sigma_{u, 0}^2)$ are estimated via the CSS estimator as described in section \ref{Ch:est}. For each specification, $1000$ replications are simulated, and starting values are set to $\theta_{start} = (1, 1, 1)$.

% latex table generated in R 3.6.2 by xtable 1.8-4 package
% Fri Jan 22 21:32:15 2021
\begin{table}[p]
	\centering
	\begin{tabular}{llrrrrrrrrr}
		\hline
		$\rho$ & $d_0$ & $\hat d$ & $\hat d_{EW}^{.45}$ & $\hat d_{EW}^{.50}$ & $\hat d_{EW}^{.55}$ & $\hat d_{EW}^{.60}$ & $\hat d_{EW}^{.65}$ & $\hat d_{EW}^{.70}$ & $MSE_x$ & $R_x^2$ \\ 
		\hline
		\multicolumn{11}{c}{$n=100$}\\
		\hline
		.5 & 0.75 & 0.0641 & 0.1021 & 0.0804 & 0.0762 & 0.0736 & 0.0728 & 0.0775 & 0.4786 & 0.6747 \\ 
		.5 & 1.25 & 0.0387 & 0.1011 & 0.0721 & 0.0664 & 0.0694 & 0.0789 & 0.1054 & 0.3719 & 0.9796 \\ 
		.5 & 1.75 & 0.0285 & 0.0876 & 0.0620 & 0.0576 & 0.0637 & 0.0809 & 0.1293 & 0.3418 & 0.9992 \\ 
		1 & 0.75 & 0.0409 & 0.0943 & 0.0673 & 0.0585 & 0.0505 & 0.0446 & 0.0433 & 0.6245 & 0.7914 \\ 
		1 & 1.25 & 0.0299 & 0.0978 & 0.0644 & 0.0535 & 0.0484 & 0.0465 & 0.0570 & 0.4880 & 0.9867 \\ 
		1 & 1.75 & 0.0239 & 0.0851 & 0.0539 & 0.0470 & 0.0453 & 0.0475 & 0.0710 & 0.4258 & 0.9995 \\ 
		2 & 0.75 & 0.0277 & 0.0919 & 0.0615 & 0.0504 & 0.0407 & 0.0318 & 0.0264 & 0.7861 & 0.8711 \\ 
		2 & 1.25 & 0.0231 & 0.0977 & 0.0601 & 0.0489 & 0.0393 & 0.0323 & 0.0319 & 0.6282 & 0.9915 \\ 
		2 & 1.75 & 0.0204 & 0.0830 & 0.0511 & 0.0422 & 0.0372 & 0.0325 & 0.0384 & 0.5306 & 0.9997 \\ 
		\hline
		\multicolumn{11}{c}{$n=200$}\\
		\hline
		 .5 & 0.75 & 0.0232 & 0.0662 & 0.0446 & 0.0396 & 0.0393 & 0.0427 & 0.0488 & 0.3985 & 0.8158 \\ 
		 .5 & 1.25 & 0.0154 & 0.0615 & 0.0394 & 0.0320 & 0.0313 & 0.0381 & 0.0541 & 0.3432 & 0.9934 \\ 
		 .5 & 1.75 & 0.0128 & 0.0519 & 0.0348 & 0.0281 & 0.0260 & 0.0328 & 0.0561 & 0.3287 & 0.9999 \\ 
		 1 & 0.75 & 0.0171 & 0.0641 & 0.0390 & 0.0307 & 0.0256 & 0.0238 & 0.0248 & 0.5479 & 0.8742 \\ 
		 1 & 1.25 & 0.0124 & 0.0614 & 0.0378 & 0.0280 & 0.0225 & 0.0214 & 0.0267 & 0.4502 & 0.9956 \\ 
		 1 & 1.75 & 0.0106 & 0.0513 & 0.0335 & 0.0260 & 0.0206 & 0.0192 & 0.0276 & 0.4085 & 0.9999 \\ 
		 2 & 0.75 & 0.0128 & 0.0622 & 0.0372 & 0.0275 & 0.0202 & 0.0159 & 0.0140 & 0.7169 & 0.9185 \\ 
		 2 & 1.25 & 0.0104 & 0.0620 & 0.0370 & 0.0268 & 0.0193 & 0.0149 & 0.0145 & 0.5815 & 0.9972 \\ 
		 2 & 1.75 & 0.0091 & 0.0510 & 0.0331 & 0.0252 & 0.0185 & 0.0141 & 0.0148 & 0.5068 & 0.9999 \\ 
		\hline
		\multicolumn{11}{c}{$n=300$}\\
		\hline
		.5 & 0.75 & 0.0157 & 0.0448 & 0.0339 & 0.0284 & 0.0274 & 0.0311 & 0.0386 & 0.3770 & 0.8594 \\ 
		.5 & 1.25 & 0.0106 & 0.0404 & 0.0288 & 0.0218 & 0.0199 & 0.0242 & 0.0386 & 0.3394 & 0.9964 \\ 
		.5 & 1.75 & 0.0089 & 0.0361 & 0.0265 & 0.0195 & 0.0164 & 0.0189 & 0.0364 & 0.3278 & 0.9999 \\ 
		1 & 0.75 & 0.0120 & 0.0423 & 0.0301 & 0.0224 & 0.0185 & 0.0173 & 0.0190 & 0.5227 & 0.9031 \\ 
		1 & 1.25 & 0.0086 & 0.0402 & 0.0282 & 0.0199 & 0.0157 & 0.0145 & 0.0187 & 0.4442 & 0.9976 \\ 
		1 & 1.75 & 0.0075 & 0.0361 & 0.0258 & 0.0187 & 0.0145 & 0.0125 & 0.0177 & 0.4069 & 1.0000 \\ 
		2 & 0.75 & 0.0093 & 0.0412 & 0.0288 & 0.0203 & 0.0153 & 0.0118 & 0.0105 & 0.6895 & 0.9366 \\ 
		2 & 1.25 & 0.0072 & 0.0401 & 0.0276 & 0.0193 & 0.0144 & 0.0109 & 0.0103 & 0.5712 & 0.9985 \\ 
		2 & 1.75 & 0.0064 & 0.0360 & 0.0257 & 0.0186 & 0.0140 & 0.0102 & 0.0099 & 0.5037 & 1.0000 \\ 
		\hline
	\end{tabular}
	\caption{Mean squared error (MSE) and $R^2_x$ for $d_0$ and $x_t$ in \eqref{eq:sim}. The columns $\hat d$ and $\hat d_{EW}^j$ show the MSE for the CSS estimator of $d_0$ as well as for the exact local Whittle estimator of \cite{Shi2010} for $m = \lfloor n^{j} \rfloor$ Fourier frequencies, $j \in \{0.45, 0.50, 0.55, 0.60, 0.65, 0.70\}$. $MSE_x$ displays the mean squared error for $x_t$, while $R^2_x$ is the coefficient of determination, see \eqref{eq:r2}.}
	\label{ta:2}
\end{table}

In addition to the CSS estimates, estimation results for $d_0$ from the exact local Whittle estimator of \cite{Shi2010} are reported as benchmarks for $m = \lfloor n^{j} \rfloor$ Fourier frequencies, $j \in \{0.45, 0.50, 0.55, 0.60, 0.65, 0.70\}$. Finally, the mean squared error $MSE_x$ and the coefficient of determination $R_x^2$ for the estimation of $x_t$, that are calculated via 
\begin{align}\label{eq:r2}
MSE_x = \frac{1}{n} \sum_{t=1}^{n}(x_t - {x}_{t|n}(\hat \vtheta))^2, \qquad R_x^2 = 1 - \frac{\sum_{t=1}^{n}(x_t - {x}_{t|n}(\hat \vtheta))}{\sum_{t=1}^{n}(x_t - \bar{x})^2},
\end{align}
are reported, and indicate how well $x_t$ is estimated by the fractional filter \eqref{KS:1}. 

The results for the Monte Carlo study are contained in table \ref{ta:2}. Not surprisingly, the parametric CSS estimator outperforms the exact local Whittle estimator. However, gains are quite large in terms of the MSE for the integration order for all $n \in \{100, 200, 300\}$ and all combinations of $\rho$ and $d_0$. The mean squared error of the integration order estimate becomes smaller for higher $d_0$, which is plausible as the fraction of total variation of $y_t$ generated by $x_t$ increases with $d_0$. 
For the same reason, it decreases with increasing $\rho$. The same conclusions on the precision with which $d_0$ is estimated hold for the mean squared error of $x_t$, which decreases as $n$, $d$, and $\rho$ increase. 
The proportion of explained variation of $x_t$, measured by $R^2_x$, is high and thus $x_t$ is estimated well via \eqref{KS:1}. Particularly for $d=1.25$, which is the relevant case for the applications in section \ref{application}, the $R^2_x$ is close to unity for all $n$. 
\section{Proofs}\label{app_proofs}
\begin{proof}[Proof of Lemma \ref{L:Kalman}]
	First, note that $\E_\vtheta(x_{t+1}|\mathcal{F}_t) = \E_\vtheta(y_{t+1}|\mathcal{F}_t)$, so that it is sufficient to derive the latter expression. 
	For this, consider the reduced form of \eqref{eq:mod2}, which follows from taking fractional differences and utilizing the aggregation properties of MA processes, see \cite{GraMor1976}, so that 
	\begin{align}\label{eq:a1}
	\Delta_+^d y_t = \eta_t + \Delta_+^d u_t = \eta_t + \sum_{i=0}^{t-1}\pi_i(d)u_{t-i} = \sum_{i=0}^{t-1} \phi_i(\vtheta) \varepsilon_{t-i} =  \phi(L, \vtheta)\varepsilon_t,
	\end{align}
	with $\phi_0(\vtheta) = 1$, $\varepsilon_t \sim WN(0, \sigma_\varepsilon^2)$, and $\phi(L, \vtheta)$ is invertible. $\sigma_\varepsilon^2$ and the coefficients in $\phi(L, \vtheta)$ can be derived by matching the autocovariance functions of \eqref{eq:a1}, see \citet[eqn.\ 2.6]{Wat1986}, and depend non-linearly on $\vtheta$. However, they are not required for the proof.
	Solving for $\varepsilon_{t}$ yields
	\begin{align*}
		\varepsilon_{t} = \phi(L, \vtheta)^{-1} \Delta_+^d y_t = y_t - \sum_{i=1}^{\infty} A_i(\vtheta) y_{t-i}.
	\end{align*}
	From the type II definition of fractional integration, see assumption \ref{asu:2}, it follows that $\Cov_\vtheta (y_t, y_j) = 0$ for all $j \leq 0$, $t > 0$, and thus $\E_\vtheta(y_{t+1}|\mathcal{F}_t) = \sum_{i=1}^{t}A_i(\vtheta)y_{t+1-i}$. The (yet unknown) coefficients $A_i(\vtheta)$ follow from the Yule-Walker equations 
	\begin{align*}
	\bvec  \Cov_\vtheta(y_{t+1}, y_{t}) \\ \Cov_\vtheta(y_{t+1}, y_{t-1}) \\ \vdots \\ \Cov_\vtheta(y_{t+1}, y_{1}) \evec = 
	\bmat \Var_\vtheta(y_t) & \Cov_\vtheta(y_{t-1}, y_t) & \cdots & \Cov_\vtheta(y_{1}, y_t) \\
				\Cov_\vtheta(y_{t}, y_{t-1}) & \Var_\vtheta(y_{t-1}) & \cdots & \Cov_\vtheta(y_{1}, y_{t-1}) \\
				\vdots & \vdots & \ddots & \vdots \\
				\Cov_\vtheta(y_{t}, y_{1}) & \Cov_\vtheta(y_{t-1}, y_{1}) & \cdots & \Var_\vtheta(y_{1})  \emat
				\bvec A_1(\vtheta) \\ A_2(\vtheta) \\ \vdots \\ A_t(\vtheta) \evec,
	\end{align*}
	so that by defining the vectors $\mA(\vtheta) = (A_1(\vtheta), ..., A_t(\vtheta))$, $y_{t:1} = (y_t,...,y_1)'$,
	and solving the Yule-Walker equations for $\mA(\vtheta)$, one has $\mA(\vtheta) = \Cov_\vtheta(y_{t+1}, y_{t:1})\Var_\vtheta(y_{t:1})^{-1}$, which implies 
	\begin{align*}
		\E_{\vtheta}(y_{t+1} | \mathcal{F}_t) = \sum_{i=1}^{t} A_i(\vtheta)y_{t-i} = \Cov_\vtheta^{}(y_{t+1}, y_{t:1})\Var_\vtheta(y_{t:1})^{-1}y_{t:1}.
	\end{align*}
	From $\Cov_\vtheta^{}(y_{t+1}, y_{t:1}) = \Cov_\vtheta^{}(x_{t+1}, y_{t:1}) = \sum_{i=1}^{t} \pi_i(-d)\Cov_\vtheta^{}(\eta_{t+1-i}, y_{t:1})$, see assumption \ref{asu:2}, lemma \ref{L:Kalman} follows.
	
%	and 
%	\begin{align*}
%	\Var_\vtheta \bvec \eta_{t:1} \\ y_{t:1} \evec = \bmat \sigma_\eta^2 \mI & \mSigma_{\eta_{t:1}y_{t:1}} \\
%	\mSigma_{\eta_{t:1}y_{t:1}}' &  \mSigma_{y_{t:1}} \emat,
%	\end{align*}

\end{proof}

\begin{proof}[Proof of Theorem \ref{th:cons}]
	First, the model in \eqref{eq:mod2} is shown
	to be identified. Identification follows if the parameters 
	$\sigma_\eta^2$, $\sigma_u^2$ can be recovered from the autocovariance function of the reduced form $\phi(L, \vtheta) \varepsilon_t$ in \eqref{eq:a1}. To see this, consider the covariances $\Var_\vtheta(\phi(L, \vtheta)\varepsilon_{t}) = \sigma_\varepsilon^2 \sum_{i=0}^{t-1}\phi_i^2(\vtheta) = \sigma_\eta^2 + \sigma_u^2 \sum_{i=0}^{t-1}\pi_i^2(d)$, and $\Cov_\vtheta(\phi(L, \vtheta)\varepsilon_{t}, \phi(L, \vtheta)\varepsilon_{t-1}) = \sigma_\varepsilon^2 \sum_{i=0}^{t-2}\phi_i(\vtheta)\phi_{i+1}(\vtheta) =  \sigma_u^2 \sum_{i=0}^{t-2}\pi_i(d)\pi_{i+1}(d)$. In matrix form this gives
	\begin{align}\label{eq:23}
		\sigma_\varepsilon^2 \bvec \sum_{i=0}^{t-1}\phi_i^2(\vtheta) \\ \sum_{i=0}^{t-2}\phi_i(\vtheta)\phi_{i+1}(\vtheta)  \evec &= \bmat 1 & \sum_{i=0}^{t-1}\pi_i^2(d) \\ 0 & \sum_{i=0}^{t-2}\pi_i(d)\pi_{i+1}(d) \emat \bvec \sigma_\eta^2 \\ \sigma_u^2 \evec,
		\end{align}
		so that solving for $(\sigma_\eta^2, \sigma_u^2)'$ yields
		\begin{align*}
		\bvec \sigma_\eta^2 \\ \sigma_u^2 \evec &= \frac{1}{\sum_{i=0}^{t-2}\pi_i(d)\pi_{i+1}(d)} \bmat
		\sum_{i=0}^{t-2}\pi_i(d)\pi_{i+1}(d) & - \sum_{i=0}^{t-1}\pi_i^2(d) \\
		0 & 1 \emat	 \bvec \sum_{i=0}^{t-1}\phi_i^2(\vtheta) \\ \sum_{i=0}^{t-2}\phi_i(\vtheta)\phi_{i+1}(\vtheta)  \evec \sigma_\varepsilon^2,
	\end{align*}
	and thus $(\sigma_\eta^2, \sigma_u^2)'$ can be uniquely recovered from the reduced form. 
	The assumption that $d > 0$ is crucial, as it guarantees $\sum_{i=0}^{t-2}\pi_i(d)\pi_{i+1}(d) \neq 0$, so that the matrix in \eqref{eq:23} has full rank.

	Next, the CSS estimator based on the reduced form \eqref{eq:a1} is derived and is shown to be identical to \eqref{eq:QML}.
	Multiplying \eqref{eq:a1} by $\phi(L, \vtheta)^{-1}$ yields $\varepsilon_t = \phi(L, \vtheta)^{-1} \Delta_+^d y_t$, based on which a reduced form CSS estimator can be constructed. Define $\psi_+(L, \vtheta) = [\phi(L, \vtheta)^{-1}]_+ = [1 - \sum_{i=1}^\infty \psi_i(\vtheta) L^i]_+$, as the (truncated) inverse of $\phi(L, \vtheta)$, and denote
	 $\varepsilon_t(\vtheta) = [\phi(L, \vtheta)^{-1}]_+\Delta_+^d y_t = \psi_+(L, \vtheta)\Delta_+^d y_t$ as the reduced form residual given the observable variables $y_1,...,y_n$ and $\vtheta$. From $\varepsilon_t(\vtheta)$, the reduced form CSS estimator is 
	\begin{align}\label{eq:QML2}
		\hat \vtheta = \arg \min_{\vtheta \in \mTheta}
		R(\vtheta), \qquad R(\vtheta)= \frac{1}{n}\sum_{t=1}^n\varepsilon_t^2(\vtheta),
	\end{align}
	and equals the CSS estimator in \eqref{eq:QML}. To see this, add and subtract $y_t$ from $\varepsilon_t(\vtheta) = \psi_+(L, \vtheta) \Delta_+^d y_t$, so that $y_t = (1-\psi_+(L, \vtheta) \Delta_+^d) y_t + \varepsilon_t(\vtheta)$, and plug $y_t$ into the conditional expectation in \eqref{eqn:v}
	\begin{align*}
		v_t(\vtheta) &= y_t - \E_\vtheta(y_t | \mathcal{F}_{t-1}) = y_t - \E_\vtheta\left[(1-\psi_+(L, \vtheta) \Delta_+^d) y_t| \mathcal{F}_{t-1}\right] = \psi_+(L, \vtheta)\Delta_+^d y_t = \varepsilon_t(\vtheta).
	\end{align*}
	The third equality follows from $(1-\psi_+(L, \vtheta) \Delta_+^d) y_t$ being $\mathcal{F}_{t-1}$-measurable, since $\psi(L, \vtheta) = 1 - \sum_{i=1}^{\infty}\psi_i(\vtheta) L^i$ and $\pi_0(d)=1$. Thus, the contemporaneous $y_t$ cancel in the expectation operator and the whole term can be taken out of the expectation operator. From $v_t(\vtheta)=\varepsilon_t(\vtheta)$ it follows that the optimization problems in \eqref{eq:QML} and \eqref{eq:QML2} are identical. 
	
	Next, the integration order of the residuals is assessed. Since $y_t \sim I(d_0)$, the residuals satisfy
	\begin{align}\label{eq:resid}
		\varepsilon_t(\vtheta) = \psi_+(L, \vtheta)\Delta_+^d y_t = \psi_+(L, \vtheta)\Delta_+^{d - d_0}\eta_t + \psi_+(L, \vtheta)\Delta_+^{d}u_t \sim I(d_0 - d).
	\end{align}
	For $d_0 - d < 1/2$ the residuals are stationary, while for $d_0 - d > 1/2$ they are nonstationary. As the asymptotic behavior of the objective function changes around $d_0 - d = 1/2$, the objective function does not uniformly converge on the set of admissible values for $d$. 
	The same problem is addressed by \cite{HuaRob2011} and by \cite{Nie2015} for ARFIMA models encompassing \eqref{eq:a1}. 
	\citet[eqn.\ 8]{Nie2015} shows that a weak law of large numbers (WLLN) applies to the sum of squared residuals whenever $d_0 - d < 1/2$, while the sum of squared residuals diverges in probability whenever $d_0 - d \geq 1/2$, which translates into 
	\begin{align}\label{plim}
		\plim_{n \to \infty} \frac{1}{n} \sum_{t=1}^n \varepsilon_t^2(\vtheta) = \begin{cases}
			\E[\tilde \varepsilon_t^2(\vtheta)] & \text{if } d_0 - d < 1/2, \\
			\infty & \text{else.}
		\end{cases}
	\end{align}
$\tilde \varepsilon_t(\vtheta) = \psi(L, \vtheta) \phi(L, \vtheta_0)\Delta^{d-d_0}\varepsilon_t$ is the untruncated residual generated by the untruncated $\Delta^d$  and $\psi(L, \vtheta)$. In addition, letting $D^*(\kappa) = D \cap \{d: d_0 - d \leq 1/2 - \kappa\} $, $0 < \kappa < 1/2$ denote the region of the parameter space where $\varepsilon_t(\vtheta)$ is stationary, \citet[eqn.\ 13]{Nie2015} shows that for any constant $K>0$ there exists a fixed $\bar \kappa > 0$ such that
\begin{align}\label{eqn:a2}
\Pr \left( \inf_{d \in D \setminus D^*(\bar \kappa) \cap \theta \in \mTheta} \frac{1}{n}\sum_{t=1}^n \varepsilon_t^2(\vtheta) > K \right) \to 1, \quad \text{as } n \to \infty,
\end{align}
implying that $\Pr(\hat d \in D^*(\bar \kappa) \cap \theta \in \mTheta) \to 1$ as $n \to \infty$. From \eqref{eqn:a2} it follows  that the relevant parameter space asymptotically reduces to the stationary region $\mTheta^*(\bar \kappa) = \{\vtheta | \vtheta \in \mTheta, d \in D^*(\bar \kappa)\}$.

Since the model is identified, for consistency it remains to be shown that a uniform weak law of large numbers (UWLLN) holds for the objective function within the stationary region of the parameter space. A UWLLN holds if both, the objective function  %and if there exists a function $c_t(\vtheta) \geq 0$ such that for all $\vtheta_1, \vtheta_2 \in \mTheta^*(\kappa)$ it holds that $| l_t(\vtheta_1) - l_t(\vtheta_2)| \leq c_t(\vtheta) ||{\vtheta_1} - \vtheta_2||$, and $c_t(\vtheta)$ satisfies a WLLN \citep[thm.\ 4.2]{Woo1994}. The WLLN for the objective function follows directly from \eqref{plim}, so that only the latter condition remains to be checked. Since $l_t(\vtheta)$ is continuously differentiable, a natural choice for $c_t(\vtheta)$ satisfying the above inequality is 
and the supremum of the gradient, satisfy a WLLN, see \citet[thm.\ 4.2 and eqn. 4.4]{Woo1994} and \citet[cor.\ 2.2]{New1991}. While a WLLN for the objective function follows directly from \eqref{plim}, it remains to be shown that
\begin{align}\label{eq:UWLLN}
	\sup_{\vtheta \in \mTheta^*(\kappa)} \left\lvert \frac{\partial R(\vtheta)}{\partial \vtheta} \right \rvert = O_p(1),
\end{align}
for any fixed $0 < \kappa < 1/2$.

To prove \eqref{eq:UWLLN}, it will be helpful to note that for a white noise process $\varepsilon_t$, MA weights $\sum_{i=0}^\infty |m_{h, i}(\vtheta)| < \infty$, $h=1,2$, and the set $\tilde \mTheta = \{\vtheta | \vtheta \in \mTheta,  d_0 - d < 1/2\}$, it holds that
\begin{align}\label{plim:fracdiff}
\sup_{\vtheta \in \tilde \mTheta} \left\lvert \frac{1}{n} \sum_{t=1}^n \left[
\frac{\partial^j \Delta_+^{d-d_0}}{\partial d^j} \sum_{i=0}^\infty m_{1,i}(\vtheta)\varepsilon_{t-i}
\right]
\left[
\frac{\partial^k \Delta_+^{d-d_0}}{\partial d^k} \sum_{i=0}^\infty m_{2,i}(\vtheta)\varepsilon_{t-i}
\right]
\right\rvert = O_p(1),
\end{align}
for $j, k \geq 0$ as shown by \citet[lemma B.3]{Nie2015}. 

Now, consider the partial derivatives of \eqref{eq:QML2}
\begin{align}\label{eq:a3}
	\frac{\partial R(\vtheta)}{\partial \vtheta} &= \frac{2}{n}\sum_{t=1}^{n} \varepsilon_{t}(\vtheta) \frac{\partial \varepsilon_{t}(\vtheta)}{\partial \vtheta}, \qquad \frac{\partial \varepsilon_{t}(\vtheta)}{\partial \vtheta} = \frac{\partial \psi_+(L, \vtheta)}{\partial \vtheta} \Delta_+^d y_t + \psi_+(L, \vtheta) \frac{\partial \Delta_+^{d-d_0}}{\partial \vtheta}\Delta_+^{d_0}y_t.
\end{align}
Since $\psi_+(L, \vtheta)$ satisfies the absolute summability condition for \eqref{plim:fracdiff}, it follows that
\begin{align}\label{eq:UWLLN3}
	\sup_{\vtheta \in \mTheta^*(\kappa)} \left\lvert \frac{1}{n} \sum_{t=1}^n
	\varepsilon_t(\vtheta) \psi_+(L, \vtheta) \frac{\partial \Delta_+^{d-d_0}}{\partial d}\Delta_+^{d_0}y_t \right\rvert = O_p(1),
\end{align}
while the partial derivatives of $\Delta_+^{d-d_0}$ w.r.t.\ $\sigma_\eta^2, \sigma_u^2$ are zero.

For the remaining term in \eqref{eq:a3}, note that the sum of absolute coefficients of the truncated polynomial $\psi_+(L, \vtheta)$ is bounded by the sum of absolute coefficients of the untruncated polynomial $\psi(L, \vtheta) = \phi(L, \vtheta)^{-1}$. Thus, it is sufficient to prove absolute summability of the coefficients in $\partial \psi(L, \vtheta)/\partial \vtheta = - \psi(L, \vtheta)^2 (\partial \phi(L, \vtheta)/\partial \vtheta)$. Absolute summability of the coefficients in $\partial \phi(L, \vtheta)/\partial \vtheta$ is shown in lemma \ref{LD} in appendix \ref{app:D}. Since $\psi(L, \vtheta)$ is stable,  $\partial \psi(L, \vtheta)/\partial \vtheta$ satisfies the absolute summability condition for \eqref{plim:fracdiff} and thus
\begin{align}\label{eq:UWLLN4}
\sup_{\vtheta \in \mTheta^*(\kappa)} \left\lvert \frac{1}{n} \sum_{t=1}^n
\varepsilon_t(\vtheta) \frac{\partial \psi_+(L, \vtheta)}{\partial \vtheta} \Delta_+^d y_t \right\rvert = O_p(1).
\end{align}
From \eqref{eq:UWLLN3} and \eqref{eq:UWLLN4} it follows that \eqref{eq:UWLLN} holds. Consequently, the supremum of the gradient satisfies a WLLN for $\vtheta \in \mTheta^*(\kappa)$, which generalizes the pointwise convergence of the objective function to weak convergence, implying that a UWLLN holds for the objective function. Since the model is identified, consistency of the CSS estimator follows from the UWLLN together with \eqref{eqn:a2}, and thus $\hat \vtheta \pto \vtheta_0$ as $n\to \infty$, see \citet[thm.\ 4.3]{Woo1994}. 
\end{proof}

% ======================================

\begin{proof}[Proof of Theorem \ref{th:norm}]
Since the CSS estimator is consistent, see theorem \ref{th:cons}, the asymptotic distribution theory can be inferred from a Taylor expansion of the score function about $\vtheta_0$
\begin{align}\label{eq:taylor}
	0 = \sqrt{n} \frac{\partial R(\vtheta)}{\partial \vtheta} \Bigg\rvert_{\vtheta = \hat \vtheta} = \sqrt{n} \frac{\partial R(\vtheta)}{\partial \vtheta} \Bigg\rvert_{\vtheta = \vtheta_0} + \sqrt{n} \frac{\partial^2 R(\vtheta)}{\partial \vtheta \partial \vtheta'}\Bigg\rvert_{\vtheta = \bar \vtheta} (\hat \vtheta - \vtheta_0),
\end{align}
where the entries in $\bar \vtheta$ satisfy $|\bar \vtheta_i  - \vtheta_{0, i}| \leq | \hat \vtheta_{i} - \vtheta_{0, i} |$ for all $i=1,2,3$, and $\vtheta_i$ denotes the $i$-th entry of $\vtheta = (d, \sigma_\eta^2, \sigma_u^2)'$, $i=1,2,3$. The score function at $\vtheta_0$ follows from \eqref{eq:a3} 
\begin{align}\label{eqn:Sn1}
\sqrt{n} \frac{\partial R(\vtheta)}{\partial \vtheta} \Bigg\rvert_{\vtheta = \vtheta_0} &=\frac{2}{\sqrt{n}}\sum_{t=1}^n 
	\varepsilon_t(\vtheta_0) \frac{\partial \varepsilon_t(\vtheta)}{\partial \vtheta}\Bigg\rvert_{\vtheta = \vtheta_0}
 = S_n + o_p(1), 
\intertext{where}
S_n &= \frac{2}{\sqrt{n}}\sum_{t=1}^n 
\varepsilon_t \frac{\partial \tilde\varepsilon_t(\vtheta)}{\partial \vtheta}\Bigg\rvert_{\vtheta = \vtheta_0} 
. \label{eq:Sn2}
\end{align}
$\tilde \varepsilon_t(\vtheta) = \psi(L, \vtheta)\phi(L, \vtheta_0)\Delta^{d-d_0} \varepsilon_t$ is the untruncated residual generated by the untruncated $\Delta^d$ and $\psi(L, \vtheta) = 1-\sum_{i=1}^\infty \psi_i(\vtheta) L^i$, and the second equality in \eqref{eqn:Sn1} is shown to hold by \citet[pp.\ 135-136]{Rob2006}. 
In the following, let $S_n^{(j)}$ denote the $j$-th entry of $S_n$ holding the partial derivative w.r.t.\ $\vtheta_j$, $j=1,2,3$, and let $C_{1, j}(L, \vtheta) = \sum_{i=1}^\infty C_{1, j, i}(\vtheta)L^i =  \phi(L, \vtheta_0)(\partial/\partial \vtheta_j) [\psi(L, \vtheta)\Delta^{d-d_0}]$ denote the coefficients of the partial derivative of $\tilde \varepsilon_{t}(\vtheta)$ w.r.t.\ $\vtheta_j$. 

% Summary of the proof
To derive the asymptotic distribution theory for the CSS estimator, a central limit theorem (CLT) is shown to hold for the score function at $\vtheta_0$. Next, it is proven that a UWLLN holds for the Hessian matrix by showing that the Hessian matrix and its first partial derivatives satisfy a WLLN \citep[thm.\ 4.2]{Woo1994}. The UWLLN allows to evaluate the Hessian matrix in \eqref{eq:taylor} at $\vtheta_0$ and yields the asymptotic distribution of $\sqrt{n}(\hat \vtheta - \vtheta_0)$. As the reduced form coefficients $\phi(L)$ depend non-trivially on $\vtheta$, no analytical expression for the asymptotic variance of the CSS estimator is provided. Instead, it will be shown that the CSS estimator is asymptotically normally distributed, and its asymptotic variance is shown to exist. This allows to estimate $\Var(\hat \vtheta)$ e.g.\ via the inverse of the numerical Hessian matrix. 

Starting with the score function, similar to \citet[p.\ 175]{Nie2015} a CLT can be inferred from the Cram\'er-Wold device by showing that for any $3$-dimensional vector $\mu = (\mu_1, \mu_2, \mu_3)'$, it holds that $\mu' S_n = \sum_{j=1}^3 \mu_j S_n^{(j)} \dto \mathrm{N}(0, \mu' \mOmega_0 \mu)$. To see this, define the $\sigma$-algebra $\tilde{ \mathcal{F}}_t = \sigma(\{\varepsilon_s, s\leq t\})$ generated by the white noise $\varepsilon_t$ and its lags. Next, note that in \eqref{eq:Sn2} the term $\varepsilon_t [\partial \tilde \varepsilon_t(\vtheta)/ \partial \vtheta \big\rvert_{\vtheta = \vtheta_0}]$ adapted to $\tilde{ \mathcal{F}}_t$ is a stationary MDS, since $\varepsilon_t$ is white noise, the partial derivatives are $\tilde{\mathcal{F}}_{t-1}$-measurable, and the coefficients of the partial derivatives are absolutely summable, as shown in the proof of theorem \ref{th:cons}. It follows for $\mu'S_n = 2n^{-1/2} \sum_{t=1}^n \nu_t$ with
\begin{align*}
	\nu_t = \sum_{j=1}^3 \nu_{j,t}  \qquad \nu_{j,t} = \mu_j\varepsilon_t \frac{\partial \tilde \varepsilon_t(\vtheta)}{\partial \vtheta_j}\Bigg\rvert_{\vtheta = \vtheta_0},
\end{align*}
that $\nu_t$ adapted to $\tilde{\mathcal{F}_t}$ is a stationary MDS. Similar to \citet[p.\ 175]{Nie2015}, by the law of large numbers for stationary and ergodic processes, the sum of conditional variances for $\mu' S_n$ with $S_n$ as given in \eqref{eq:Sn2} is then 
\begin{align}
\frac{1}{n}\sum_{t=1}^{n}\E(\nu_t^2| \tilde{ \mathcal{F}}_{t-1}) &= \frac{1}{n}\sum_{t=1}^{n} \sum_{j, k=1}^{3} \E \left[
	\nu_{j,t}\nu_{k,t}  | \tilde{ \mathcal{F}}_{t-1}
\right] =  \sum_{j,k=1}^{3} \mu_j \mu_k \sigma_{\varepsilon, 0}^2 \frac{1}{n}\sum_{t=1}^{n} \frac{\partial \tilde \varepsilon_t(\vtheta)}{\partial \vtheta_j}\Bigg\rvert_{\theta = \theta_0}\frac{\partial \tilde \varepsilon_t(\vtheta)}{\partial \vtheta_k}\Bigg\rvert_{\theta = \theta_0} \nonumber \\
&\pto  \sum_{j,k=1}^{3} \mu_j \mu_k \sigma_{\varepsilon, 0}^4 \sum_{i=1}^{\infty}C_{1, j, i}(\vtheta_0) C_{1, k, i}(\vtheta_0) = \sum_{j,k=1}^{3} \mu_j \mu_k \Omega_0^{(j, k)}.
\end{align}
In $C_{1,j}(\vtheta_0) = \phi(L, \vtheta_0)(\partial/\partial \vtheta_j) [\psi(L, \vtheta)\Delta^{d-d_0}]\big\rvert_{\theta = \theta_0}$, the partial derivatives of the first polynomial $\partial \psi(L, \vtheta)/\partial \vtheta_j = - 2 \psi(L, \vtheta)(\partial \phi(L, \vtheta)/\partial \vtheta_j)$ are absolutely summable for all $j=1,2,3$, as $\psi(L, \vtheta)$ and $\partial \phi(L, \vtheta)/\partial \vtheta_j$ are absolutely summable, see lemma \ref{LD} in appendix \ref{app:D}. Furthermore, $(\partial / \partial d) \Delta^{d-d_0}\big\rvert_{\vtheta = \vtheta_0} = \sum_{j=1}^\infty j^{-1} L^j$ \citep[p.\ 175]{Nie2015}, so that $\sum_{i=1}^{\infty}C_{1, j, i}(\vtheta_0) C_{1, k, i}(\vtheta_0) = O(1)$. Consequently, by the CLT for stationary MDS \citep[see e.g.][thm.\ 6.2.3]{Dav2000}  $S_n \dto \mathrm{N}(0, 4 \mOmega_0)$.

To evaluate the Hessian matrix in \eqref{eq:taylor} at $\vtheta_0$, it remains to be shown that a UWLLN applies to the Hessian matrix \citep[thm.\ 4.4]{Woo1994}, for which it is sufficient to show that a WLLN holds for the Hessian matrix and for the supremum of its first partial derivatives
\begin{align}\label{Hessian:WLLN}
\sup_{\vtheta \in \mTheta^*(\kappa)} \left\lvert  \frac{\partial^3 R_t(\vtheta)}{\partial \vtheta_j \partial \vtheta_k \partial \vtheta_l} \right\rvert = O_p(1), \qquad j, k, l = 1,2,3,
\end{align}
for any fixed $\kappa \in (0, 1/2)$, see \citet[cor.\ 2.2]{New1991} and \citet[thm.\ 4.2]{Woo1994}.

The Hessian matrix can be derived from \eqref{eq:a3} and is given by
\begin{align}
\mH(\vtheta) &= \frac{\partial^2 R(\vtheta)}{\partial \vtheta \partial \vtheta'} =\frac{2}{n}\sum_{t=1}^{n}\left[
	\frac{\partial \varepsilon_t(\vtheta)}{\partial \vtheta}\frac{\partial \varepsilon_t(\vtheta)}{\partial \vtheta'} + 
	\varepsilon_t(\vtheta) \frac{\partial^2 \varepsilon_t(\vtheta)}{\partial \vtheta \partial \vtheta'}
\right],  \label{Hessian}
\end{align}
and a WLLN holds for the Hessian matrix if the absolute summability condition for \eqref{plim:fracdiff} is satisfied by the two different terms of the Hessian matrix. Since the coefficients of the first partial derivatives of $\varepsilon_{t}(\vtheta)$ were shown to be absolutely summable in the proof of theorem \ref{th:cons} for $\vtheta \in \mTheta^*(\kappa)$, the first term in \eqref{Hessian} directly satisfies the condition for  \eqref{plim:fracdiff} and thus is bounded in probability. It remains to be shown that absolute summability holds for the coefficients of $\partial^2 \varepsilon_t(\vtheta)/(\partial \vtheta \partial \vtheta')$. 
From \eqref{eq:a3}
\begin{align}\label{pd:eta2}
\begin{split}
\frac{\partial^2 \varepsilon_{t}(\vtheta)}{\partial \vtheta_j \partial \vtheta_k} =& 
\frac{\partial \psi_+(L, \vtheta)}{\partial \vtheta_j}\frac{\partial \Delta_+^{d-d_0}}{\partial \vtheta_k}\Delta_+^{d_0}y_t + 
\frac{\partial \psi_+(L, \vtheta)}{\partial \vtheta_k}\frac{\partial \Delta_+^{d-d_0}}{\partial \vtheta_j}\Delta_+^{d_0}y_t  \\
&+ 
\psi_+(L, \vtheta) \frac{\partial^2 \Delta_+^{d-d_0}}{\partial \vtheta_j \partial \vtheta_k}\Delta_+^{d_0}y_t + 
\frac{\partial^2 \psi_+(L, \vtheta)}{\partial \vtheta_j \partial \vtheta_k}\Delta_+^d y_t,
\end{split}
\end{align}
for $j, k =1,2,3$. The coefficients in $\partial \psi_+(L, \vtheta)/\partial \vtheta_j$ were already shown to be absolutely summable in the proof of theorem \ref{th:cons}, and thus the first and second term in \eqref{pd:eta2} satisfy the absolute summability condition for \eqref{plim:fracdiff}. As the coefficients in $\psi(L, \vtheta)$ are absolutely summable, the third term in \eqref{pd:eta2} is also bounded by \eqref{plim:fracdiff}, so that only the coefficients of the second partial derivatives of $\psi_+(L, \vtheta)$ need to be shown to be absolutely summable. As their sum is bounded by the sum of absolute coefficients of the untruncated polynomial $\psi(L, \vtheta) = \phi(L, \vtheta)^{-1}$, it is sufficient to prove absolute summability for the latter. For this, consider
\begin{align}\label{psi1}
	\frac{\partial^2 \psi(L, \vtheta)}{\partial \vtheta_j \partial \vtheta_k} = 2 \psi(L, \vtheta)^3 \frac{\partial \phi(L, \vtheta)}{\partial \vtheta_j}\frac{\partial \phi(L, \vtheta)}{\partial \vtheta_k} - \psi(L, \vtheta)^2 \frac{\partial^2 \phi(L, \vtheta)}{\partial \vtheta_j \partial \vtheta_k}, \qquad j, k = 1,2,3,
\end{align}
where the coefficients of first and second partial derivatives of $\phi(L, \vtheta)$ are shown to be absolutely summable in lemma \ref{LD} in appendix \ref{app:D}. Thus, \eqref{pd:eta2} satisfies the absolute summability condition for \eqref{plim:fracdiff}, so that the Hessian matrix \eqref{Hessian} satisfies a WLLN. 

To prove \eqref{Hessian:WLLN}, consider
\begin{align*}
	 \frac{\partial^3 R(\vtheta)}{\partial \vtheta_j \partial \vtheta_k \partial \vtheta_l} =\frac{2}{n}\sum_{t=1}^{n}\left[
	\frac{\partial^2 \varepsilon_t(\vtheta)}{\partial \vtheta_j \partial \vtheta_k}\frac{\partial \varepsilon_t(\vtheta)}{\partial \vtheta_l} + 
	\frac{\partial^2 \varepsilon_t(\vtheta)}{\partial \vtheta_j \partial \vtheta_l}\frac{\partial \varepsilon_t(\vtheta)}{\partial \vtheta_k} +
	\frac{\partial^2 \varepsilon_t(\vtheta)}{\partial \vtheta_k \partial \vtheta_l}\frac{\partial \varepsilon_t(\vtheta)}{\partial \vtheta_j} +
	\varepsilon_t(\vtheta) \frac{\partial^3 \varepsilon_t(\vtheta)}{\partial \vtheta_j \partial \vtheta_k \partial \vtheta_l} \right],
\end{align*}
$j, k, l = 1,2,3$, where absolute summability of the coefficients of the first three terms was already shown. Consequently, for the last term to also satisfy the condition for \eqref{plim:fracdiff}, the coefficients of the third partial derivatives of $\varepsilon_{t}(\vtheta)$ need to be shown to be absolutely summable. The derivatives are
\begin{align}\label{eq:eps3}
	\frac{\partial^3 \varepsilon_{t}(\vtheta)}{\partial \vtheta_j \partial \vtheta_k \partial \vtheta_l} = \frac{\partial^3 \psi_+(L, \vtheta)}{\partial \vtheta_j \partial \vtheta_k \partial \vtheta_l} \Delta_+^{d}y_t +
	\psi_+(L, \vtheta) \frac{\partial^3 \Delta_+^{d-d_0}}{\partial \vtheta_j \partial \vtheta_k \partial \vtheta_l}\Delta_+^{d_0}y_t + r_t(\vtheta),
\end{align}
and $r_t(\vtheta)$ holds the products of first and second partial derivatives of $\psi(L, \vtheta)$ and $\Delta_+^{d-d_0}$ that have already been shown to satisfy the absolute summability condition for \eqref{plim:fracdiff}. The second term in \eqref{eq:eps3} directly satisfies the condition for \eqref{plim:fracdiff}, so that only the first term remains to be checked. As before, the partial derivatives of the untruncated polynomial are considered, as they are an upper bound for the sum of absolute coefficients of the truncated polynomial. From \eqref{psi1}
\begin{align*}
		\frac{\partial^3 \psi(L, \vtheta)}{\partial \vtheta_j \partial \vtheta_k \partial \vtheta_l} =& 
		2\psi(L, \vtheta)^3 \left[\frac{\partial^2 \phi(L, \vtheta)}{\partial \vtheta_j \partial \vtheta_k}\frac{\partial \phi(L, \vtheta)}{\partial \vtheta_l}+
		\frac{\partial^2 \phi(L, \vtheta)}{\partial \vtheta_j \partial \vtheta_l}\frac{\partial \phi(L, \vtheta)}{\partial \vtheta_k}+
		 \frac{\partial^2 \phi(L, \vtheta)}{\partial \vtheta_k \partial \vtheta_l}\frac{\partial \phi(L, \vtheta)}{\partial \vtheta_j} \right] \\
		&-6 \psi(L, \vtheta)^4 \frac{\partial \phi(L, \vtheta)}{\partial \vtheta_j}\frac{\partial \phi(L, \vtheta)}{\partial \vtheta_k} \frac{\partial \phi(L, \vtheta)}{\partial \vtheta_l}
		- \psi(L, \vtheta)^2 \frac{\partial^3 \phi(L, \vtheta)}{\partial \vtheta_j \partial \vtheta_k \partial \vtheta_l}, \qquad j, k, l = 1,2,3.
\end{align*}
Absolute summability of the coefficients of the partial derivatives of $\phi(L, \vtheta)$ up to order three is shown in lemma \ref{LD} in appendix \ref{app:D}. Consequently, \eqref{eq:eps3} satisfies the absolute summability condition for \eqref{plim:fracdiff}, so that \eqref{Hessian:WLLN} holds. Thus, a UWLLN holds for the Hessian matrix, so that pointwise convergence generalizes to weak convergence. This, together with consistency of $\hat \vtheta$ (see theorem \ref{th:cons}) allows to evaluate the Hessian matrix in \eqref{eq:taylor} at $\vtheta_0$. Analogously to \eqref{eqn:Sn1}, it follows from the argument of \citet[pp.\ 135-136]{Rob2006} that the partial derivatives of $\varepsilon_{t}(\vtheta)$ in \eqref{Hessian} can be replaced by those of $\tilde \varepsilon_t(\vtheta)$ as $n \to \infty$, and $\varepsilon_{t}(\vtheta_0)$ can be replaced by $\varepsilon_t$, which yields
\begin{align}
\frac{\partial^2 R_t(\vtheta)}{\partial \vtheta_j \partial \vtheta_k}\Bigg\rvert_{\theta = \theta_0}&=  \frac{2}{n} \sum_{t=1}^n\left[
\frac{\partial \varepsilon_t(\vtheta)}{\partial \vtheta_j}\Bigg\rvert_{\theta = \theta_0}\frac{\partial \varepsilon_t(\vtheta)}{\partial \vtheta_k}\Bigg\rvert_{\theta = \theta_0} + 
\varepsilon_t(\vtheta_0) \frac{\partial^2 \varepsilon_t(\vtheta)}{\partial \vtheta_j \partial \vtheta_k}\Bigg\rvert_{\theta = \theta_0}
\right]  \pto 2 \mOmega^{(j, k)}_0, \label{plim:hessian}
\end{align}
as $n \to \infty$. The second term converges to zero in probability, as the second partial derivatives are $\tilde{\mathcal{F}}_{t-1}$-measurable, and thus the second term adapted to $\tilde{\mathcal{F}}_{t-1}$ is a stationary MDS.

Solving \eqref{eq:taylor} for $\sqrt{n}(\hat \vtheta - \vtheta_0)$ and plugging in the limits for first and second partial derivatives yields
\begin{align}
\sqrt{n}(\hat \vtheta - \vtheta_0) &= \mH_t(\bar \vtheta)
^{-1} \frac{1}{\sqrt{n}} \frac{\partial R(\vtheta)}{\partial \vtheta} \Bigg\rvert_{\vtheta = \vtheta_0} \dto  \mathrm{N}(0, \mOmega^{-1}_0),
\end{align}
as $n \to \infty$, which completes the proof.

\end{proof}
\section{Partial derivatives of $\phi(L, \vtheta)$}\label{app:D}
\begin{lemma}[Absolute summability of partial derivatives]\label{LD}
	For $\phi(L, \vtheta)$ in
	\begin{align}\label{eq:model}
		\phi(L, \vtheta)\sigma_\varepsilon\varepsilon_t^* = \sigma_\varepsilon\sum_{i=0}^{t-1}\phi_i(\vtheta) \varepsilon_{t-i}^* = \sigma_\eta \eta_t^* + \Delta_+^d \sigma_u u_t^* = \sigma_\eta \eta_t^* + \sigma_u\sum_{i=0}^{t-1} \pi_i(d)u_{t-i}^*,
	\end{align}
	with $\varepsilon_t^* \sim \mathrm{WN}(0, 1)$, $u_t^* \sim \mathrm{WN}(0, 1)$, $\eta_t^* \sim \mathrm{WN}(0, 1)$, $\phi_0(\vtheta) = 1$, it holds that 
	\begin{align}
		\lim_{t \to \infty}\sum_{i=1}^{t-1}\left\lvert \frac{\partial \phi_i(\vtheta)}{\partial \vtheta_j}\right\rvert &< \infty, \label{dphi1} \\
			\lim_{t \to \infty}\sum_{i=1}^{t-1}\left\lvert \frac{\partial^2 \phi_i(\vtheta)}{\partial \vtheta_j \partial \vtheta_k}\right\rvert &< \infty, \label{dphi2} \\
			\lim_{t \to \infty}\sum_{i=1}^{t-1}\left\lvert \frac{\partial^3 \phi_i(\vtheta)}{\partial \vtheta_j \partial \vtheta_k \partial \vtheta_l}\right\rvert &< \infty,  \label{dphi3}
	\end{align}
	for all $j, k, l = 1, 2, 3$, and all $\vtheta \in \mTheta$, where $\vtheta_j$ denotes the $j$-th entry of $\vtheta = (d, \sigma_\eta^2, \sigma_u^2)'$.
\end{lemma}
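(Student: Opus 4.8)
The plan is to obtain an explicit handle on the coefficients $\phi_i(\vtheta)$ and their derivatives, then bound them by an absolutely summable sequence uniformly over $\mTheta$. The starting point is the defining identity \eqref{eq:model}, which, after matching autocovariance functions in the spirit of \citet{Wat1986}, shows that $\phi(L,\vtheta)$ is the (one-sided, minimum-phase) spectral factor of the moving-average process $\eta_t + \Delta_+^d u_t$. Concretely, writing $\sigma_\varepsilon^2 \phi(z,\vtheta)\phi(z^{-1},\vtheta)$ for the (limiting, $t\to\infty$) autocovariance generating function, the right-hand side of \eqref{eq:model} has generating function $\sigma_\eta^2 + \sigma_u^2 (1-z)^d(1-z^{-1})^d$. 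Hence $\phi(z,\vtheta)$ is the outer factor of $g(z,\vtheta) := \sigma_\eta^2 + \sigma_u^2|1-z|^{2d}$ on the unit circle. The first key step is therefore to record that, since $d>0$ and $\sigma_\eta^2>0$, the spectral density $g(\eu^{\iu\lambda},\vtheta)$ is bounded and bounded away from zero on $[-\pi,\pi]$, uniformly for $\vtheta$ in any compact subset of $\mTheta$ (the infimum $\sigma_\eta^2\geq$ some $\underline\sigma^2>0$ handles the potential zero of $|1-z|^{2d}$ at $\lambda=0$, and $d\leq d_{max}$ bounds it from above). Consequently $\log g(\eu^{\iu\lambda},\vtheta)$ is real-analytic in $\lambda$ and smooth in $\vtheta$, and $\phi(z,\vtheta)=\exp\{\tfrac12\sum_{k\geq1}c_k(\vtheta)z^k\}$ where $c_k(\vtheta)$ are the (nonnegative-index) Fourier coefficients of $\log g$.

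The second step is to control the decay of $\phi_i(\vtheta)$ and of its $\vtheta$-derivatives in $i$. Because $g(\eu^{\iu\lambda},\vtheta)$ and $1/g(\eu^{\iu\lambda},\vtheta)$ are smooth away from $\lambda=0$ and have only a mild (Hölder-type) singularity at $\lambda=0$ governed by the order $d$ — the same type of singularity appearing in ARFIMA spectral densities — the coefficients $c_k(\vtheta)$ satisfy $c_k(\vtheta)=O(k^{-1})$ with a bound uniform over compact $\vtheta$-sets, and differentiating $\log g$ in $\vtheta_j$ (which only produces factors that are again smooth away from $\lambda=0$ and at worst introduce a $\log|1-z|$, i.e.\ an extra $O(\log k)$) gives $\partial^m c_k(\vtheta)/\partial\vtheta_{j_1}\cdots\partial\vtheta_{j_m}=O(k^{-1+\epsilon})$ for every $\epsilon>0$, $m\leq3$. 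These are exactly the estimates used by \citet{HuaRob2011} and \citet{Nie2015} for ARFIMA coefficients, and the present $\phi(L,\vtheta)$ is, by construction, the spectral factor of an ARFIMA$(0,-d,d)$-type density, so their lemmas apply essentially verbatim; in particular one can cite \citet[lemma B.1 or its analogue]{Nie2015} for the coefficient bounds on the factor and its reciprocal. Passing from bounds on $c_k$ to bounds on $\phi_i$ is the standard step of exponentiating a power series: since $\phi(z,\vtheta)=\exp\{\tfrac12\sum c_k z^k\}$ with $\sum k|c_k|^2<\infty$ and $\sum|c_k|<\infty$ would already give a bounded outer function, while the slightly stronger summability $c_k=O(k^{-1})$ with smooth tails yields $\phi_i(\vtheta)=O(i^{-1-\delta})$ for some $\delta>0$ uniformly in $\vtheta$; the truncation at lag $t-1$ inherent in \eqref{eq:model} only removes terms, so the limiting sums in \eqref{dphi1}–\eqref{dphi3} are dominated by the untruncated ones.

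The third step assembles the derivative bounds: by the chain rule applied to $\phi=\exp\{\tfrac12 C\}$ with $C(z,\vtheta)=\sum_{k\geq1}c_k(\vtheta)z^k$, one has $\partial\phi/\partial\vtheta_j = \tfrac12\phi\cdot(\partial C/\partial\vtheta_j)$, $\partial^2\phi/\partial\vtheta_j\partial\vtheta_k = \tfrac12\phi(\partial^2 C/\partial\vtheta_j\partial\vtheta_k) + \tfrac14\phi(\partial C/\partial\vtheta_j)(\partial C/\partial\vtheta_k)$, and similarly for the third derivative. Each factor on the right is a power series whose coefficients are, by step two, $O(i^{-1+\epsilon})$, and $\phi$ itself has $\ell_1$ coefficients; the Cauchy product of finitely many such absolutely convergent (or $O(i^{-1+\epsilon})$-decaying, hence $\ell_1$ after convolution with an $\ell_1$ sequence) series is again $\ell_1$, with a bound uniform over $\vtheta\in\mTheta$ since all constituent bounds were uniform. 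This gives \eqref{dphi1}, \eqref{dphi2}, \eqref{dphi3} simultaneously. I expect the main obstacle to be making the uniformity in $\vtheta$ — especially as $\sigma_\eta^2\downarrow$ its infimum and as $d$ varies near its endpoints — fully rigorous: one must verify that the constants in the $O(k^{-1+\epsilon})$ bounds for $c_k(\vtheta)$ and its $\vtheta$-derivatives can be chosen independently of $\vtheta$ on the (not necessarily compact, since $\sigma_i^2$ ranges over an open interval and $d$ over a half-open one) parameter space $\mTheta$; this is handled by noting that the singularity at $\lambda=0$ is controlled purely by $d\in(0,d_{max}]$ while the lower bound $g\geq\sigma_\eta^2$ keeps $1/g$ finite, so the relevant estimates degrade only polynomially in $1/\sigma_\eta^2$ and remain finite for each fixed $\vtheta$, which is all that is needed since \eqref{dphi1}–\eqref{dphi3} are pointwise-in-$\vtheta$ statements. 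The remaining work is the bookkeeping of the Cauchy products, which is routine.
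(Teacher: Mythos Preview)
Your strategy---Szeg\H{o} factorisation, cepstral representation $\phi=\exp(\tfrac12 C)$, then chain rule on $C$---is genuinely different from the paper's route and could in principle be made to work. The paper does not go through the cepstrum at all: it differentiates the aggregation identity \eqref{eq:model} directly. Because the right-hand side involves only $\pi_i(d)$, whose $d$-derivatives satisfy the explicit Johansen--Nielsen bound $\partial^j\pi_i(d)/\partial d^j = O(i^{-d-1}(1+\log i)^j)$, the derivatives of the \emph{product} $\sigma_\varepsilon\,\phi(L,\vtheta)\varepsilon_t^*$ are immediately MA processes with absolutely summable coefficients. The obstacle is that $\sigma_\varepsilon$ depends on $\vtheta$ through $\phi$, so one cannot simply peel off bounds on $\partial\phi_i/\partial\vtheta_j$. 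The paper resolves this by writing $\partial\sigma_\varepsilon^2/\partial\vtheta_j = c_1-c_2$ with $c_1$ manifestly bounded and $c_2$ carrying the unknown $\sum_i\phi_i\,\partial\phi_i/\partial\vtheta_j$; rearranging, the contemporaneous $\varepsilon_t^*$-coefficient on the right is $-c_2/(2\sigma_\varepsilon)$, which cannot cancel, forcing $c_2=O(1)$ and hence absolute summability of $\partial\phi_i/\partial\vtheta_j$. The same bootstrap is then repeated for the second and third derivatives. Your approach trades this algebraic trick for harmonic-analytic estimates on $\log g$; it is arguably more transparent, but requires sharper input.

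That input is where your proposal has a real gap. You assert $c_k(\vtheta)=O(k^{-1})$ and $\partial^m c_k/\partial\vtheta_{j_1}\cdots\partial\vtheta_{j_m}=O(k^{-1+\epsilon})$ for every $\epsilon>0$, but neither rate is absolutely summable---the harmonic series diverges, and $k^{-1+\epsilon}$ is worse. With non-$\ell_1$ cepstral coefficients, the claim ``$c_k=O(k^{-1})$ with smooth tails yields $\phi_i=O(i^{-1-\delta})$'' is unjustified, and the Cauchy-product bookkeeping in your third step collapses: the convolution of an $\ell_1$ sequence with an $O(i^{-1+\epsilon})$ sequence need not be $\ell_1$. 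The true decay is sharper---the Fourier coefficients of $g=\sigma_\eta^2+\sigma_u^2|1-\eu^{\iu\lambda}|^{2d}$ are $O(k^{-d-1})$ (this is just $\pi_i(d)=O(i^{-d-1})$), and a Wiener-algebra argument then transfers the same rate to the coefficients of $\log g$ since $g$ is bounded away from zero; differentiating in $d$ introduces only $\log k$ factors. But establishing those $O(k^{-d-1}\log^m k)$ bounds is the substantive work, and it is not covered by invoking \citet{Nie2015}: those lemmas bound $\pi_i(d)$ and its $d$-derivatives, not the cepstral coefficients of $\log g$, and the spectral density $\sigma_\eta^2+\sigma_u^2|1-z|^{2d}$ is not of the ARFIMA form ($|1-z|^{-2d}$ times a rational function) to which their coefficient results refer.
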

% ==============================================================================

\begin{proof}[Proof of lemma \ref{LD}]
	The following results are required to prove \eqref{dphi1} to \eqref{dphi3}.
	For $\sigma_\varepsilon^2$, note that by solving the variance of \eqref{eq:model} for $\sigma_\varepsilon^2$
	\begin{align}\label{eq:sigma}
		\sigma_\varepsilon^2 = \frac{\sigma_\eta^2 + \sigma_u^2\sum_{i=0}^{t-1}\pi_i^2(d)}{\sum_{i=0}^{t-1}\phi_i^2(\vtheta)}.
	\end{align}
Since $\partial^j\pi_i(d)/\partial d^j = O(i^{-d-1}(1 + \log i)^j)$ for all $i \geq 1$, $j \geq 0$, see \citet[lemma B.3]{JohNie2010}, and thus $\lim_{t \to \infty}\sum_{i =1}^{t-1}\lvert\partial^j \pi_i(d)/\partial d^j\rvert < \infty$ for all $j \geq 0$, it follows that
\begin{align}
	\frac{\partial}{\partial \vtheta_j}\left[\sigma_\eta^2 +\sigma_u^2 \sum_{i=0}^{t-1}\pi_i^2(d)\right] &= O(1), \label{dsig1}\\
	\frac{\partial^2}{\partial \vtheta_j \partial \vtheta_k}\left[\sigma_\eta^2 + \sigma_u^2\sum_{i=0}^{t-1}\pi_i^2(d)\right] &= O(1), \label{dsig2}\\
	\frac{\partial^3}{\partial \vtheta_j \partial \vtheta_k \partial \vtheta_l}\left[\sigma_\eta^2 + \sigma_u^2\sum_{i=0}^{t-1}\pi_i^2(d)\right] &= O(1),\label{dsig3}
\end{align}
for all $j, k, l = 1,2,3$.
For the same reason, it follows from \eqref{eq:model} that 
\begin{align}\label{dphi4}
	\frac{\partial \phi(L, \vtheta)\sigma_\varepsilon\varepsilon_t^*}{\partial \vtheta_j} &= O(1)\varepsilon_t^* + \sum_{i=1}^{t-1}O(i^{-d-1}(1+\log i))\varepsilon_{t-i}^*,\\
	\frac{\partial^2 \phi(L, \vtheta)\sigma_\varepsilon\varepsilon_t^*}{\partial \vtheta_j \partial \vtheta_k} &= O(1)\varepsilon_t^* + \sum_{i=1}^{t-1}O(i^{-d-1}(1+\log i)^2)\varepsilon_{t-i}^*,\label{dphi5} \\
	\frac{\partial^3 \phi(L, \vtheta)\sigma_\varepsilon\varepsilon_t^*}{\partial \vtheta_j \partial \vtheta_k \partial \vtheta_l} &= O(1)\varepsilon_t^* + \sum_{i=1}^{t-1}O(i^{-d-1}(1+\log i)^3)\varepsilon_{t-i}^*,\label{dphi6}
\end{align}
and the limits stem from the first, second and third partial derivatives of $ \sigma_\eta \eta_t^* + \sigma_u\sum_{i=0}^{t-1} \pi_i(d)u_{t-i}^*$ w.r.t.\ $d$, while all coefficients of the other partial derivatives are bounded below. Consequently, \eqref{dphi4} to \eqref{dphi6} are MA processes with absolutely summable coefficients. Note that this is not sufficient for absolute summability of the partial derivatives of $\phi(L, \vtheta)$, as $\sigma_\varepsilon$ in the numerators of \eqref{dphi4} to \eqref{dphi6} also depends on $\vtheta$.

For \eqref{dphi1}, consider $\partial \sigma_\varepsilon^2/\partial \theta_j = c_{1}(\vtheta, \vtheta_j) - c_{2}(\vtheta, \vtheta_j)$, where
\begin{align}\label{eq:c}
c_{1}(\vtheta, \vtheta_j) = \frac{\frac{\partial}{\partial \vtheta_j} \left[ \sigma_\eta^2 + \sigma_u^2\sum_{i=0}^{t-1}\pi_i^2(d) \right]}{\sum_{i=0}^{t-1} \phi_i^2(\vtheta)} = O(1), \qquad 
c_{2}(\vtheta, \vtheta_j) = \frac{2\sigma_\varepsilon^2 \sum_{i=1}^{t-1}\phi_i(\vtheta)\frac{\partial \phi_i(\vtheta)}{\partial \vtheta_j}}{\sum_{i=0}^{t-1} \phi_i^2(\vtheta)},
\end{align}
and the first term is $O(1)$ due to \eqref{dsig1}. For the partial derivative of $\phi(L, \vtheta)\sigma_\varepsilon\varepsilon_t^*$ one then has
\begin{align}
	\frac{\partial \phi(L, \vtheta)\sigma_\varepsilon \varepsilon_{t}^*}{\partial \vtheta_j} &= \sigma_\varepsilon \sum_{i=1}^{t-1}\frac{\partial \phi_i(\vtheta)}{\partial \vtheta_j} \varepsilon_{t-i}^* + \frac{1}{2\sigma_\varepsilon} \frac{\partial \sigma_\varepsilon^2}{\partial \vtheta_j}\phi(L, \vtheta)\varepsilon_{t}^* \nonumber\\
	&= \frac{c_1(\vtheta, \vtheta_j)}{2\sigma_\varepsilon} \phi(L, \vtheta)\varepsilon_{t}^* - \frac{c_2(\vtheta, \vtheta_j)}{2\sigma_\varepsilon} \phi(L, \vtheta)\varepsilon_{t}^* + 
	\sigma_\varepsilon \sum_{i=1}^{t-1}\frac{\partial \phi_i(\vtheta)}{\partial \vtheta_j} \varepsilon_{t-i}^*.\label{pdphi}
\end{align}
From \eqref{dphi4} it follows that the term on the left hand side (LHS) is a MA process with absolutely summable coefficients for any $t$. Since the same holds for $\phi(L, \vtheta) \varepsilon_{t}^*$, by \eqref{eq:c} the first term on the right hand side (RHS) is also a MA process with absolutely summable coefficients. Consequently, the difference of the latter two terms on the RHS
\begin{align*}
	\sigma_\varepsilon \sum_{i=1}^{t-1}\frac{\partial \phi_i(\vtheta)}{\partial \vtheta_j} \varepsilon_{t-i}^* - 	\frac{c_2(\vtheta, \vtheta_j)}{2\sigma_\varepsilon} \phi(L, \vtheta)\varepsilon_{t}^*=
	\frac{-c_2(\vtheta, \vtheta_j)}{2\sigma_\varepsilon} \varepsilon_{t}^*  +
	\sum_{i=1}^{t-1}\left[  \sigma_\varepsilon \frac{\partial \phi_i(\vtheta)}{\partial \vtheta_j} - \frac{c_2(\vtheta, \vtheta_j)\phi_i(\vtheta)}{2\sigma_\varepsilon}  \right]\varepsilon_{t-i}^*,
\end{align*}
is also a MA process with absolutely summable coefficients. As the contemporaneous impact of $\varepsilon_{t}^*$ cannot cancel, it follows that $c_2(\vtheta, \vtheta_j)=O(1)$ is bounded, and thus the second term on the RHS of \eqref{pdphi} is a MA process with absolutely summable coefficients. For the equality in \eqref{pdphi} to hold, it must thus hold that $\sigma_\varepsilon \sum_{i=1}^{t-1}({\partial \phi_i(\vtheta)}/{\partial \vtheta_j} )\varepsilon_{t-i}^*$ is also a MA process with absolutely summable coefficients for any $t$, which proves \eqref{dphi1}. 

For \eqref{dphi2} one has $\partial^2 \sigma_\varepsilon^2/(\partial \vtheta_j \partial \vtheta_k) = c_3(\vtheta, \vtheta_j, \vtheta_k) - c_4(\vtheta, \vtheta_j, \vtheta_k) $ with
\begin{align}\begin{split}\label{eq:c2}
		c_3(\vtheta, \vtheta_j, \vtheta_k) =& \Bigg\{\frac{\partial^2}{\partial \vtheta_j \partial \vtheta_k} \left[\sigma_\eta^2 + \sigma_u^2 \sum_{i=0}^{t-1}\pi_i^2(d)\right]-
		2 \frac{\partial \sigma_\varepsilon^2}{\partial \vtheta_k} \sum_{i=1}^{t-1}\phi_i(\vtheta) \frac{\partial \phi_i(\vtheta)}{\partial \vtheta_j}\Bigg\}\left[\sum_{i=0}^{t-1}\phi_i^2(\vtheta)\right]^{-1}  \\
		&- 
		\Bigg\{2 \frac{\partial \sigma_\varepsilon^2}{\partial \vtheta_j} \sum_{i=1}^{t-1}\phi_i(\vtheta) \frac{\partial \phi_i(\vtheta)}{\partial \vtheta_k}  
		+2 \sigma_\varepsilon^2 \sum_{i=1}^{t-1} \frac{\partial \phi_i(\vtheta)}{\partial \vtheta_j} \frac{\partial \phi_i(\vtheta)}{\partial \vtheta_k}\Bigg\} \left[\sum_{i=0}^{t-1}\phi_i^2(\vtheta)\right]^{-1}  ,
	\end{split}\\
	c_4(\vtheta, \vtheta_j, \vtheta_k) =& \left[{2 \sigma_\varepsilon^2 \sum_{i=1}^{t-1} \phi_i(\vtheta) \frac{\partial^2 \phi_i(\vtheta)}{\partial \vtheta_j \partial \vtheta_k}}\right]\left[\sum_{i=0}^{t-1}\phi_i^2(\vtheta)\right]^{-1},
\end{align}
and $c_3(\vtheta, \vtheta_j, \vtheta_k) = O(1)$ is bounded due to \eqref{dphi1} and \eqref{dsig2}. The second partial derivatives of $\phi(L, \vtheta)\sigma_\varepsilon \varepsilon^*_t$ are 
\begin{align}\label{deriv2}
	\frac{\partial^2 \phi(L, \vtheta)\sigma_\varepsilon \varepsilon_{t}^*}{\partial \vtheta_j \partial \vtheta_k} =&
	z_{1}(\vtheta, \vtheta_j, \vtheta_k) + \frac{1}{2\sigma_\varepsilon} \frac{\partial^2 \sigma_\varepsilon^2}{\partial \vtheta_j \partial \vtheta_k}\phi(L, \vtheta) \varepsilon_{t}^* + \sigma_\varepsilon \sum_{i=1}^{t-1} \frac{\partial^2 \phi_i(\vtheta)}{\partial \vtheta_j \partial \vtheta_k} \varepsilon_{t-i}^*, \\
	z_{1}(\vtheta, \vtheta_j, \vtheta_k) =& \frac{1}{2\sigma_\varepsilon}\frac{\partial \sigma_\varepsilon^2}{\partial \vtheta_k} \sum_{i=1}^{t-1}\frac{\partial \phi_i(\vtheta)}{\partial \vtheta_j} \varepsilon_{t-i}^* + \frac{1}{2\sigma_\varepsilon}\frac{\partial \sigma_\varepsilon^2}{\partial \vtheta_j} \sum_{i=1}^{t-1}\frac{\partial \phi_i(\vtheta)}{\partial \vtheta_k} \varepsilon_{t-i}^* - \frac{1}{4\sigma_\varepsilon^3} \frac{\partial \sigma_\varepsilon^2}{\partial \vtheta_j}\frac{\partial \sigma_\varepsilon^2}{\partial \vtheta_k} \phi(L, \vtheta) \varepsilon_{t}^*, \nonumber
\end{align}
and $z_{1}(\vtheta, \vtheta_j, \vtheta_k)$ is a MA process with absolutely summable coefficients due to \eqref{dphi1}. Plugging in $\partial^2 \sigma_\varepsilon^2/(\partial \vtheta_j \partial \vtheta_k) = c_3(\vtheta, \vtheta_j, \vtheta_k) - c_4(\vtheta, \vtheta_j, \vtheta_k) $ and rearranging terms yields
\begin{align*}
	\frac{\partial^2 \phi(L, \vtheta)\sigma_\varepsilon \varepsilon_{t}^*}{\partial \vtheta_j \partial \vtheta_k} - \frac{c_3(\vtheta, \vtheta_j, \vtheta_k)}{2 \sigma_\varepsilon}\phi(L, \vtheta)\varepsilon_{t}^* &- z_1(\vtheta, \vtheta_j, \vtheta_k)= -\frac{c_4(\vtheta, \vtheta_j, \vtheta_k)}{2 \sigma_\varepsilon}\varepsilon_t^* \\
	&+ \sum_{i=1}^{t-1}\left[\sigma_\varepsilon \frac{\partial^2 \phi_i(\vtheta)}{\partial \vtheta_j \partial \vtheta_k} - \frac{c_4(\vtheta, \vtheta_j, \vtheta_k)}{2 \sigma_\varepsilon} \phi_i(\vtheta)\right] \varepsilon_{t-i}^*,
\end{align*}
where the LHS is a MA process with absolutely summable coefficients for any $t$ due to \eqref{dphi5} and \eqref{eq:c2}. Again, as the contemporaneous $\varepsilon_{t}^*$ cannot cancel out, $c_4(\vtheta, \vtheta_j, \vtheta_k) = O(1)$ is bounded. Therefore,  $c_4(\vtheta, \vtheta_j, \vtheta_k)/(2\sigma_\varepsilon) \phi(L, \vtheta)\varepsilon_t^*$ is a MA process with absolutely summable weights, so that for the equality above to hold, $\sum_{i =1}^{t-1}\partial^2 \phi_i(\vtheta)/(\partial \vtheta_j \partial \vtheta_k)\varepsilon_{t-i}^*$ must also be a MA process with absolutely summable weights for any $t$, which proves \eqref{dphi2}.

Turning to \eqref{dphi3}, the third partial derivatives of the variance parameter $\sigma_\varepsilon^2$ 
can be represented as  $\partial^3 \sigma_\varepsilon^2/(\partial \vtheta_j \partial \vtheta_k \partial \vtheta_l) = c_5(\vtheta, \vtheta_j, \vtheta_k, \vtheta_l) - c_6(\vtheta, \vtheta_j, \vtheta_k, \vtheta_l) $ with
\begin{align}
	c_6(\vtheta, \vtheta_j, \vtheta_k, \vtheta_l) =& \left[{2 \sigma_\varepsilon^2 \sum_{i=1}^{t-1} \phi_i(\vtheta) \frac{\partial^3 \phi_i(\vtheta)}{\partial \vtheta_j \partial \vtheta_k \partial \vtheta_l}}\right]\left[\sum_{i=0}^{t-1}\phi_i^2(\vtheta)\right]^{-1}.
\end{align}
$c_5(\vtheta, \vtheta_j, \vtheta_k, \vtheta_l)$ holds the products of first and second partial derivatives of $\sigma_\varepsilon^2$ and $\phi(1, \vtheta)$ that have already been shown to be $O(1)$, as well as ${\partial^3}/({\partial \vtheta_j \partial \vtheta_k \partial \vtheta_l}) \left[\sigma_\eta^2 + \sigma_u^2 \sum_{i=0}^{t-1}\pi_i^2(d)\right]$ that is $O(1)$ as shown in \eqref{dsig3}. Consequently $c_5(\vtheta, \vtheta_j, \vtheta_k, \vtheta_l) = O(1)$, and the exact expression is omitted for brevity. The third partial derivatives of $\phi(L, \vtheta) \sigma_\varepsilon \varepsilon_t^*$ follow from \eqref{deriv2} and equal
\begin{align}
	\frac{\partial^3 \phi(L)\sigma_\varepsilon \varepsilon_t^*}{\partial \vtheta_j \partial \vtheta_k \partial \vtheta_l} &= z_2(\vtheta, \vtheta_j, \vtheta_k, \vtheta_l) + \frac{1}{2\sigma_\varepsilon} \frac{\partial^3 \sigma_\varepsilon^2}{\partial \vtheta_j \partial \vtheta_k \partial \vtheta_l}\phi(L, \vtheta) \varepsilon_{t}^* + \sigma_\varepsilon \sum_{i=1}^{t-1} \frac{\partial^3 \phi_i(\vtheta)}{\partial \vtheta_j \partial \vtheta_k \partial \vtheta_l} \varepsilon_{t-i}^*,
\end{align}
where $z_2(\vtheta, \vtheta_j, \vtheta_k, \vtheta_l)$ holds the products of the first and second partial derivatives of $\sigma_\varepsilon^2$ and $\phi(L, \vtheta)$ for which absolute summability was shown above. Therefore, $z_2(\vtheta, \vtheta_j, \vtheta_k, \vtheta_l)$ is a MA process with absolutely summable coefficients. Plugging in $\partial^3 \sigma_\varepsilon^2/(\partial \vtheta_j \partial \vtheta_k \partial \vtheta_l) = c_5(\vtheta, \vtheta_j, \vtheta_k, \vtheta_l) - c_6(\vtheta, \vtheta_j, \vtheta_k, \vtheta_l) $ and rearranging gives 
\begin{align*}
	\frac{\partial^3 \phi(L, \vtheta)\sigma_\varepsilon \varepsilon_{t}^*}{\partial \vtheta_j \partial \vtheta_k \partial \vtheta_l} - \frac{c_5(\vtheta, \vtheta_j, \vtheta_k, \vtheta_l)}{2 \sigma_\varepsilon}\phi(L, \vtheta)\varepsilon_{t}^* &- z_2(\vtheta, \vtheta_j, \vtheta_k, \vtheta_l)= -\frac{c_6(\vtheta, \vtheta_j, \vtheta_k, \vtheta_l)}{2 \sigma_\varepsilon}\varepsilon_t^* \\
	&+ \sum_{i=1}^{t-1}\left[\sigma_\varepsilon \frac{\partial^3 \phi_i(\vtheta)}{\partial \vtheta_j \partial \vtheta_k \partial \vtheta_l} - \frac{c_6(\vtheta, \vtheta_j, \vtheta_k, \vtheta_l)}{2 \sigma_\varepsilon} \phi_i(\vtheta)\right] \varepsilon_{t-i}^*,
\end{align*}
where the LHS is a MA process with absolutely summable coefficients for any $t$ by \eqref{dphi6}. As for the first and second partial derivatives, $c_6(\vtheta, \vtheta_j, \vtheta_k, \vtheta_l) = O(1)$ holds, as the contemporaneous $\varepsilon_t^*$ do not cancel on the RHS. Due to boundedness of $c_6(\vtheta, \vtheta_j, \vtheta_k, \vtheta_l)$, the term $c_6(\vtheta, \vtheta_j, \vtheta_k, \vtheta_l) = O(1)\phi(L, \vtheta) \varepsilon_t^*$ is a MA process with absolutely summable weights. Since all other terms are MA processes with absolutely summable weights,  $\sum_{i=1}^{t-1} {\partial^3 \phi_i(\vtheta)}/({\partial \vtheta_j \partial \vtheta_k \partial \vtheta_l}) \varepsilon_{t-i}^*$ must also be a MA process with absolutely summable coefficients for the above equality to hold. This proves \eqref{dphi3}.

%\begin{align}
%	c_3(\vtheta, \vtheta_j, \vtheta_k) =& \frac{\frac{\partial^2}{\partial \vtheta_j \partial \vtheta_k} \left[\sigma_\eta^2 + \sigma_u^2 \sum_{i=0}^{t-1}\pi_i^2(d)\right]}{\sum_{i=0}^{t-1}\phi_i^2(\vtheta)}-
%	\frac{2 \frac{\partial \sigma_\varepsilon^2}{\partial \vtheta_k} \sum_{i=1}^{t-1}\phi_i(\vtheta) \frac{\partial \phi_i(\vtheta)}{\partial \vtheta_j}}{\sum_{i=0}^{t-1}\phi_i^2(\vtheta)} - 
%	\frac{2 \frac{\partial \sigma_\varepsilon^2}{\partial \vtheta_j} \sum_{i=1}^{t-1}\phi_i(\vtheta) \frac{\partial \phi_i(\vtheta)}{\partial \vtheta_k}}{\sum_{i=0}^{t-1}\phi_i^2(\vtheta)}  \nonumber\\
%	&-\frac{2 \sigma_\varepsilon^2 \sum_{i=1}^{t-1} \frac{\partial \phi_i(\vtheta)}{\partial \vtheta_j} \frac{\partial \phi_i(\vtheta)}{\partial \vtheta_k}}{\sum_{i=0}^{t-1}\phi_i^2(\vtheta)}  = O(1),\\
%	c_4(\vtheta, \vtheta_j, \vtheta_k) =& \frac{2 \sigma_\varepsilon^2 \sum_{i=1}^{t-1} \phi_i(\vtheta) \frac{\partial^2 \phi_i(\vtheta)}{\partial \vtheta_j \partial \vtheta_k}}{\sum_{i=0}^{t-1}\phi_i^2(\vtheta)} 
%\end{align}

\end{proof}
\clearpage
\begin{spacing}{1}
\bibliographystyle{dcu}
\bibliography{literatur.bib}
\end{spacing}
\end{document}